\theoremstyle{break}
\def\url@leostyle{  \@ifundefined{selectfont}{\def\UrlFont{\sf}}{\def\UrlFont{\small\ttfamily}}}
                                                                                                                                                                                                                                                                                                            \def\and{{\rm and}}            \def\det#1{{\rm det}\,{#1}}            
          \def\exp{{\rm exp}}                                        
\def\skew{{\rm skew}}                                  
\newfont{\Sf}{cmssbx10 scaled 2074}
\newbox{\assem}
\newbox{\asse}
                                                        \def\sqtwo3{{\textstyle {\sqrt{2 \over 3}}}}   \newcommand{\IP}{{\rm I\kern-.18em P}}           \newcommand{\II}{{\rm I\kern-.18em I}}           \newcommand{\IF}{{\rm I\kern-.25em F}}           \newcommand{\IE}{{\rm I\kern-.25em E}}           \def\IR{{\rm I\kern-.15em R}}
\newcommand{\ia}{{\rm\kern.24em                     \vrule width.02em height0.9ex depth-.05ex
   \kern-.26em a}}
\newcommand{\ic}{{\rm\kern.24em                     \vrule width.02em height0.9ex depth-.05ex
   \kern-.26em c}}
\newcommand{\IA}{{\rm\kern.22em                      \vrule width.02em
        height0.5ex depth 0ex
    \kern-.24em A}}
\newcommand{\IC}{{\rm\kern.24em                     \vrule width.02em height1.4ex depth-.05ex
   \kern-.26em C}}
\DeclareMathOperator{\dist}{dist}
\renewcommand{\epsilon}{\varepsilon}
\newcommand{\dV}{\,{\rm dV} }
\newcommand{\norm}[1]{\|#1\|}
\newcommand{\abs}[1]{\left| #1 \right|}
\newtheorem*{rep@theorem}{\rep@title}
\newcommand{\newreptheorem}[2]{\newenvironment{rep#1}[1]{ \def\rep@title{#2 \ref{##1}} \begin{rep@theorem}} {\end{rep@theorem}}}
\newtheorem{lem}{Lemma}[section]
\newtheorem{ass}[lem]{Assumption}
\newtheorem{rem}[lem]{Remark}
\newtheorem{defi}[lem]{Definition}
\newtheorem{theo}[lem]{Theorem}
\newtheorem{cor}[lem]{Corollary}
\newtheorem{scheme}[lem]{Scheme}
\newtheorem{exa}[lem]{Example}
\newtheorem{prob}[lem]{Problem}
\newcommand{\leref}[1]{Lemma \ref{#1}}
\newcommand{\assref}[1]{Assumption \ref{#1}}
\newcommand{\theref}[1]{Theorem \ref{#1}}
\newcommand{\coref}[1]{Corollary \ref{#1}}
\newcommand{\exaref}[1]{Example \ref{#1}}
\newcommand{\remref}[1]{Remark \ref{#1}}
\newcommand{\probref}[1]{Problem \ref{#1}}
\newcommand{\R}{\mathbb{R}}
\newcommand{\N}{\mathbb{N}}
\newcommand{\C}{\mathbb{C}}
\DeclareMathOperator{\GL}{GL}
\DeclareMathOperator{\SO}{SO}
\let\O\relax
\DeclareMathOperator{\O}{O}
\DeclareMathOperator{\skewop}{skew}
\renewcommand{\skew}{\skewop}
\DeclareMathOperator{\diag}{diag}
\DeclareMathOperator{\sym}{sym}
\DeclareMathOperator{\Tr}{tr}
\DeclareMathOperator{\dev}{dev}
\DeclareMathOperator{\so}{\mathfrak{so}}
\DeclareMathOperator{\polar}{R_{\rm p}}
\newcommand{\Sym}{ {\rm{Sym}} }
\newcommand{\Psym}{ {\rm{PSym}} }
\newcommand{\id}{{\boldsymbol{\mathds{1}}}}
\DeclareMathOperator{\Det}{det}
\renewcommand{\det}[1]{ {\Det[{#1}]} }
\newcommand{\tr}[1]{ {\Tr \left[{#1}\right]} }
\newcommand{\secref}[1]{Section \ref{#1}}
\definecolor{orange}{rgb}{1.0,0.5,0}
\DeclareMathOperator{\Reals}{\mathbb{R}}
\renewcommand{\R}{\Reals}
\DeclareMathOperator{\argminmathop}{arg\,min}
\newcommand{\argmin}[2]{\mathchoice{\underset{#1}{\argminmathop}\, {#2}}{\argminmathop_{#1}\, {#2}}{}{}}
\newcommand{\scalprod}[2]{\big<#1,\,#2\big>}
\newcommand{\setdef}[2]{\lbrace #1 \;\vert\; #2\rbrace}
\DeclareMathOperator{\spanop}{span}
\newcommand{\vspan}[1]{\spanop\left(\left\{ #1 \right\}\right)}
\newcommand{\hsnorm}[1]{\left\lVert #1 \right\rVert}
\DeclareMathOperator{\Skew}{Skew}
\DeclareMathOperator{\Diag}{Diag}
\DeclareMathOperator{\RPosZ}{\sideset{}{_0^+}\Reals}
\newcommand{\eqdef}{\coloneqq}
\newcommand{\eqiso}{\cong}
\newcommand{\isequivto}{\Longleftrightarrow}
\newcommand{\mrot}{\overline{R}}
\newcommand{\mstretch}{\overline{U}}
\DeclareMathOperator{\rpolar}{rpolar}
\DeclareMathOperator{\wmm}{W_{\mu,\mu_c}}
\newcommand{\countres}{
  \setcounter{equation}{0}
  \setcounter{figure}{0}
  \setcounter{table}{0}
}
\renewcommand{\baselinestretch}{1.0}          \sloppy
\renewcommand{\itemize}{  \ifnum \@itemdepth >\thr@@\@toodeep\else
    \advance\@itemdepth\@ne
    \edef\@itemitem{labelitem\romannumeral\the\@itemdepth}    \expandafter
    \list
      \csname\@itemitem\endcsname
      {\def\makelabel##1{\hss\llap{##1}}        \topsep=.8ex\itemsep=-.2ex}  \fi}
\renewcommand\section{\@startsection {section}{1}{\z@}  {-3.5ex \@plus -1ex \@minus -.2ex}  {2.3ex \@plus.2ex}  {\boldmath\normalfont\Large\bfseries}}
\renewcommand\subsection{\@startsection{subsection}{2}{\z@}  {-3.25ex\@plus -1ex \@minus -.2ex}  {1.5ex \@plus .2ex}  {\boldmath\normalfont\large\bfseries}}
\renewcommand\subsubsection{\@startsection{subsubsection}{3}{\z@}  {-3.25ex\@plus -1ex \@minus -.2ex}  {1.5ex \@plus .2ex}  {\boldmath\normalfont\normalsize\bfseries}}
\renewcommand\paragraph{\@startsection{paragraph}{4}{\z@}  {3.25ex \@plus1ex \@minus.2ex}  {-1em}  {\boldmath\normalfont\normalsize\bfseries}}
\renewcommand\subparagraph{\@startsection{subparagraph}{5}{\parindent}  {3.25ex \@plus1ex \@minus .2ex}  {-1em}  {\boldmath\normalfont\normalsize\bfseries}}
\renewcommand*{\@fnsymbol}[1]{\ensuremath{\ifcase#1\or 1\or 2\or 3\or
   \mathsection\or \mathparagraph\or \|\or **\or \dagger\dagger
   \or \ddagger\ddagger \else\@ctrerr\fi}}
\title{Optimality of the relaxed polar factors by a characterization of
       the set of real square roots of real symmetric matrices}
\author{Lev Borisov
\!\thanks{Lev Borisov,
Department of Mathematics,
Rutgers University, 240 Hill Center, Newark, NJ 07102, United States,
email: borisov@math.rutgers.edu},
\quad Andreas Fischle
\!\thanks{Corresponding author: Andreas Fischle,
Institut f\"ur Numerische Mathematik,
TU Dresden,
Zellescher Weg 12-14,
01069 Dresden,
Germany,
email: andreas.fischle@tu-dresden.de},
\quad and \quad
Patrizio Neff
\!\thanks{Patrizio Neff,
Head of Lehrstuhl f\"{u}r Nichtlineare Analysis und Modellierung,
Fakult\"{a}t f\"{u}r Mathematik,
Universit\"{a}t Duisburg-Essen,
Thea-Leymann Str. 9,
45127 Essen,
Germany,
email: patrizio.neff@uni-due.de}}
\begin{document}
\selectfont
\maketitle

\pagenumbering{arabic}

\makeatletter{}\begin{center}\textbf{Abstract}\end{center}
\begin{center}
  \begin{minipage}{0.95\textwidth}
  We consider the problem to determine the optimal rotations
    $R \in \SO(n)$ which minimize
    \begin{align*}
      W: \SO(n) \to \RPosZ,
      \quad
      W(R\,;D) \;\eqdef\; \hsnorm{\sym(R D - \id)}^2
    \end{align*}
    for a given diagonal matrix
    $D \eqdef \diag(d_1, \ldots, d_n) \in \R^{n \times n}$
    with positive entries $d_i > 0$.
    The objective function $W$ is the reduced form of
    the Cosserat shear-stretch energy, which, in its general form,
    is a contribution in any geometrically nonlinear, isotropic,
    and quadratic Cosserat micropolar (extended) continuum model.
    We characterize the critical points of the energy $W(R\,;D)$,
    determine the global minimizers and compute the global minimum.
    This proves the correctness of  previously obtained formulae
    for the optimal Cosserat rotations in dimensions two and three.
    The key to the proof is the result that every real matrix whose
    square is symmetric can be written in some
    orthonormal basis as a block-diagonal matrix with blocks
    of size at most two. This statement does not seem to
    appear in the literature.
\end{minipage}
\end{center}

\vspace*{0.125cm}
{\small
  {\bf{Keywords:}}
  Cosserat theory,
  micropolar media,
  Grioli's theorem,
  rotations,
  special orthogonal group,
  (non-symmetric) matrix square root,
  symmetric square,
  polar decomposition,
  relaxed-polar decomposition.

{\bf{AMS 2010 subject classification:}}
  15A24,     22E30,     74A30,     74A35,     74B20,     74G05,     74G65,     74N15.   }

 \countres

\setcounter{tocdepth}{1}
\renewcommand{\baselinestretch}{-1.0}\normalsize
{
  \small
  \tableofcontents
}
\renewcommand{\baselinestretch}{1.0}\normalsize

\makeatletter{}\section{Introduction}
\label{sec:intro}

\subsection{The problem}
In this contribution, we characterize the solutions to the optimality
problem stated as
\begin{prob}
  \label{prob:opt}
  Let $D \eqdef \diag(d_1,\ldots,d_n) > 0$ be a positive definite
  diagonal matrix and let
  \begin{equation}
    \label{eq:def_energy}
    W: \SO(n) \,\times\, {\rm Diag}(n) \to \RPosZ,
    \quad
    W(R\,;D) \;\eqdef\; \hsnorm{\sym(RD - \id)}^2\;.
  \end{equation}
  Compute the relaxed polar factors, i.e., the set of energy-minimizing
  rotations
  \begin{equation}
    \rpolar(D) \;\eqdef\;
    \argmin{R\,\in\,\SO(n)}{W(R\,;D)}
    \;=\;
    \argmin{R\,\in\,\SO(n)}{\hsnorm{\sym(RD - \id)}^2} \;\subseteq\; \SO(n)\;.
  \end{equation}
\end{prob}
We use the notation
$\sym(X)  \eqdef \frac{1}{2}(X + X^T)$,
$\skew(X) \eqdef \frac{1}{2}(X - X^T)$,
$\dev(X)  \eqdef X - \frac{1}{n}\, \tr{X} \cdot \id$,
$\scalprod{X}{Y} \eqdef \tr{X^TY}$ and we denote the induced
Frobenius matrix norm by
$\hsnorm{X}^2 \eqdef \scalprod{X}{X} = \sum_{1 \leq i,j \leq n} X_{ij}^2$.
We call a rotation $R \in \SO(n)$ optimal for given $D \in \Diag(n)$
if it is a global minimizer for the energy $W(R\,;D)$ defined
in~\eqref{eq:def_energy}. Furthermore, we denote the spaces of
symmetric and skew-symmetric matrices by $\Sym(n) \subset \R^{n \times n}$
and $\Skew(n) \subset \R^{n \times n}$, respectively.

This work is concerned with the derivation of formulae that explicitly
characterize the relaxed polar factors $\rpolar(D)$ in arbitrary
dimension. It is beyond the scope of the current paper to develop
efficient and stable numerical approximations of the relaxed polar
factors $\rpolar(D)$. However, this will be a logical next step.

\subsection{Previous results}
We consider the quadratic Cosserat shear-stretch energy
$W_{\mu,\mu_c}: \SO(n) \times \GL^+(n) \to \RPosZ$
\begin{equation}
  \label{eq:intro:wmm}
  W_{\mu,\mu_c}(\mrot\,;F) \eqdef
  \mu\,\hsnorm{\sym(\mrot^TF - \id)}^2
  \,+\,
  \mu_c\,\hsnorm{\skew(\mrot^TF - \id)}^2
\end{equation}
with weights (material parameters) $\mu > 0$ and
$\mu_c \geq 0$.\footnote{A more in-depth presentation of the
  interpretation in mechanics is provided in~\secref{subsec:mechanics}.}
Let us introduce the general weighted form of the relaxed polar
factors
\begin{equation}
  \rpolar_{\mu,\mu_c}(F) \eqdef \argmin{\mrot\,\in\,\SO(n)}
         {\left(\mu\,\hsnorm{\sym(\mrot^TF - \id)}^2
           \,+\,
          \mu_c\,\hsnorm{\skew(\mrot^TF - \id)}^2\right)}\;.
\end{equation}
The unique global minimizer $\mrot \in \SO(n)$ in the \emph{classical
parameter range} $\mu_c \geq \mu > 0$ is the orthogonal factor
$\polar(F)$ in the right polar decomposition of $F \in \GL^+(n)$,
see~\cite{Neff_Grioli14}. The \emph{non-classical parameter range}
$\mu > \mu_c \geq 0$ of parameters can, surprisingly, be reduced
to a single \emph{non-classical limit case}
$(\mu,\mu_c) = (1,0)$, see~\cite{Fischle:2015:OC2D}.
The choice of weights (material parameters) $\mu > 0$
and $\mu_c \geq 0$ is crucial since they characterize
a pitchfork bifurcation between a classical branch of
minimizers, i.e, where $\polar(F)$ is optimal, and an
interesting new type of non-classical minimizers.

Due to the parameter reduction, it suffices to consider the
Cosserat shear-stretch energy in the limit case $(\mu,\mu_c) = (1,0)$
given by
\begin{equation}
  W_{1,0}(\mrot\,;F) \eqdef \hsnorm{\sym(\mrot^TF - \id)}^2\;.
\end{equation}
A Cosserat strain energy $W(F, \mrot)$ is called isotropic, if
it satisfies the invariance $W(Q_1FQ_2, Q_1\mrot Q_2) = W(F,\mrot)$
for all $Q_1,Q_2 \in \SO(n)$. Exploiting the isotropy of the
Cosserat shear-stretch energy, it can be equivalently expressed
in terms of a rotation $R \in \SO(n)$ acting relative to the
polar factor $\polar(F)$, see the introduction
to~\cite{Fischle:2015:OC3D} for details. After this second
reduction step, we obtain the equivalent energy
\begin{equation}
  W_{1,0}(R,;D) \eqdef \hsnorm{\sym(RD - \id)}^2\;,
\end{equation}
where $D = \diag(d_1,\ldots,d_n) > 0$ is a positive definite
matrix. Its diagonal entries are given by the singular values
$d_i = \nu_i > 0$ of $F \in \GL^+(n)$.

Hence, on the basis of the previous works \cite{Fischle:2015:OC2D}
and~\cite{Fischle:2015:OC3D}, it suffices to solve~\probref{prob:opt}
in order to characterize the global minimizers for the quadratic
Cosserat shear-stretch energy in the entire non-classical parameter
range. For a short overview of the previous results,
see \cite{Fischle:2017:GTW}.

Explicit formulae for the critical points and the global minimizers
$\rpolar^\pm_{\mu,\mu_c}(F)$ of $W_{\mu,\mu_c}(\mrot\,;F)$ in dimension
two have been presented in~\cite{Fischle:2015:OC2D}. The
corresponding minimal energy levels were also provided.
In dimension three, the following explicit formulae for the
solutions to \probref{prob:opt} were obtained using computer
algebra~\cite[Corollary 2.7]{Fischle:2015:OC3D}:
\begin{cor}[Energy-minimizing relative rotations for $(\mu,\mu_c) = (1,0)$]
  \label{cor:rhat10}
  Let $D = \diag(d_1,d_2,d_3)$ such that $d_1 > d_2 > d_3 > 0$.
  Then the solutions to~\probref{prob:opt} are given by the
  energy-minimizing relative rotations
  \begin{equation}
    \label{eq:rpolar}
    \rpolar(D) = \left\lbrace
    \begin{pmatrix}
      \cos \alpha  & -\sin \alpha & 0\\
      \sin \alpha  &  \cos \alpha & 0\\
      0            &  0              & 1\\
    \end{pmatrix}
    \right\rbrace\;,
  \end{equation}
  where $\alpha \in [-\pi,\pi]$ is an optimal rotation angle
  satisfying
  \begin{equation}
    \label{eq:alpha}
    \alpha =
    \begin{cases}
      \; 0\;,  &\quad\text{if}\quad d_1 + d_2 \leq 2\;,\\
      \; \pm\arccos(\frac{2}{d_1 + d_2})\;,
      &\quad\text{if}\quad d_1 + d_2 \geq 2\;.
    \end{cases}
  \end{equation}
  In particular, for $d_1 + d_2 \leq 2$, we have $\rpolar(D) = \{\id\}$.
\end{cor}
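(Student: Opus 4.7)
The plan is a three-stage critical-point analysis on the compact manifold $\SO(3)$: derive the Euler--Lagrange equation in a form that exposes the structural lemma on matrices with symmetric square; use that lemma to cut the critical set down to rotations whose axis is a standard basis vector; then compare the finitely many resulting energy values.

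For the first variation, tangent vectors at $R$ take the form $RA$ with $A\in\Skew(3)$. Since $S:=\sym(RD-\id)$ is already symmetric, cyclicity of the trace gives
\begin{equation*}
  \tfrac{1}{2}\,\partial_A W(R;D)\;=\;\scalprod{R^{T}S\,D}{A}\;,
\end{equation*}
so $R$ is critical iff $R^{T}SD\in\Sym(3)$. Substituting $S=\tfrac12(RD+DR^T)-\id$ and cancelling the common $\tfrac12 D^{2}$ terms collapses this to the clean identity
\begin{equation*}
  (DR-\id)^{2}\;\in\;\Sym(3)\;.
\end{equation*}

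Next I apply the square-root lemma to $B:=DR-\id$: in some orthonormal basis $B$ is block-diagonal with blocks of size $1$ or $2$. In dimension three this forces the existence of a unit vector $u$ that is simultaneously an eigenvector of $B$ and $B^{T}$. Writing $DRu=(\lambda+1)u$ and $R^{T}Du=(\mu+1)u$ and substituting one relation into the other yields $D^{2}u=(\lambda+1)(\mu+1)\,u$. Since the diagonal entries $d_i^{2}$ are pairwise distinct, $u$ must be $\pm e_j$ for some $j\in\{1,2,3\}$; a norm comparison in $Ru=\tfrac{1}{\mu+1}Du$ then forces $Re_j=\pm e_j$. Hence every critical $R$ either fixes some $e_j$ (a rotation about the axis $e_j$) or sends it to $-e_j$ (a $\pi$-rotation about an axis in $e_j^{\perp}$). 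Restricting the critical-point equation to the induced $2{\times}2$ block on the $e_k,e_l$-plane collapses to the scalar condition $\sin\alpha\bigl((d_k+d_l)\cos\alpha-2\bigr)=0$ in the first case, giving $\alpha\in\{0,\pi\}$ together with, provided $d_k+d_l\geq 2$, the pair $\alpha=\pm\arccos\tfrac{2}{d_k+d_l}$; the $\pi$-rotation branch is handled analogously and yields candidates only when the stricter inequality $d_k-d_l\geq 2$ holds.

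Finally I would plug each candidate into $W$ and compare. Direct substitution gives $\tfrac12(d_k-d_l)^{2}+(d_j-1)^{2}$ for a nontrivial rotation about $e_j$, and the algebraic identity
\begin{equation*}
  W(\id)-\bigl[\tfrac12(d_k-d_l)^{2}+(d_j-1)^{2}\bigr]
  \;=\;\tfrac12(d_k+d_l-2)^{2}\;\geq\;0
\end{equation*}
shows that this branch beats the identity whenever it exists; when $d_1+d_2\leq 2$ no nontrivial branch is available for any axis, so $R=\id$ wins. Pairwise comparison among the three axis choices factors as e.g.\ $W_{e_3}-W_{e_2}=\tfrac{1}{2}(d_3-d_2)(2d_1+d_2+d_3-4)$ and its analogue for $W_{e_3}-W_{e_1}$; the existence constraints $d_k+d_l\geq 2$ combined with $d_1>d_2>d_3>0$ make each factor strictly negative, identifying rotation about $e_3$ as the unique minimizer. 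The remaining $\pi$-rotation candidates carry $(d_j+1)^{2}$ penalties in place of $(d_j-1)^{2}$ and are eliminated by direct subtraction. The main obstacle is the reduction in the second paragraph: without the square-root lemma one cannot \emph{a priori} rule out critical rotations whose invariant plane is skew to the coordinate axes, and it is precisely the lemma that pins that invariant direction to an eigenvector of $D$.
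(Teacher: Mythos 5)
Your proposal is correct and follows essentially the same route as the paper: the Euler--Lagrange analysis gives the symmetric-square condition, the structural block-diagonalization theorem pins the invariant $1$-dimensional subspace to a coordinate axis via the eigenvalue equation $D^2 u = (\lambda+1)(\mu+1)u$ (this is a direct $n=3$ specialization of the paper's simultaneous-invariance Lemma~\ref{lem:sim_inv}), and the $2\times2$ subproblems are solved with the same trace/symmetry dichotomy (Lemma~\ref{lem:root_2x2}) before comparing critical energies. Two small wording issues do not affect correctness: in $W_{e_3}-W_{e_2}=\tfrac12(d_3-d_2)(2d_1+d_2+d_3-4)$ the two factors have opposite signs (only the product is negative), and the claim that the $\pi$-rotation candidates are eliminated by the $(d_j+1)^2$ versus $(d_j-1)^2$ penalty should be supplemented by noting that their $2\times2$ blocks also contribute $\tfrac12(d_k+d_l)^2 \geq \tfrac12(d_k-d_l)^2$.
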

Note that the validation of the minimizers in dimension three, i.e.,
of the formulae \eqref{eq:rpolar} and \eqref{eq:alpha}
in~\cite{Fischle:2015:OC3D} was based on brute force stochastic
minimization, since a proof of optimality was out of reach.

With the present contribution, we close this gap in $n = 3$ and
generalize the previously obtained formulae $\rpolar^\pm_{1,0}(F)$
from~\cite{Fischle:2015:OC3D,Fischle:2016:RPNI} to arbitrary dimension
$n$. Note that the parameter transformation proved in
\cite{Fischle:2015:OC2D} allows to recover the general solution
in the non-classical parameter range $\rpolar^\pm_{\mu,\mu_c}(F)$
from $\rpolar^\pm_{1,0}(F)$ by a rescaling of the deformation gradient,
but we shall not detail this here.

Our main result is that~\probref{prob:opt} has $2^k$ global minimizers
that are block-diagonal, similar to the $n = 3$ case above. Here,
$k$ is the number of blocks of size two. More precisely, we prove
\begin{reptheo}{theo:global_min_nd}
  Let $D \eqdef \diag(d_1,\ldots,d_n) > 0$ with ordered entries
  $d_1 > d_2 > \ldots > d_n > 0$. Let us fix the maximum $k \in \N_0$
  for which $d_{2k-1} + d_{2k} > 2$. Any global minimizer $R \in \SO(n)$ of
  $$
  W(R\,;D) \eqdef \hsnorm{\sym(RD - \id)}^2
  $$
  corresponds to a partition of the index set $\{1,\ldots,n\}$ with
  $k \geq 0$ leading subsets of size two
  $$
  \underbrace{\{1,2\} \sqcup \{3,4\} \sqcup \ldots \sqcup \{2k-1, 2k\}}_{k\;\rm{subsets\,of\,size\,two}}
    \;\sqcup\;
  \underbrace{\{2k+1\} \sqcup \ldots \sqcup \{n\}}_{(n - 2k)\;\text{subsets\,of\,size one}}
  $$
  in the classification of critical points provided by \theref{theo:critical_values}. The global minimum of $W(R\,;D)$ is given by
\begin{equation}
  W^{\rm red}(D) \eqdef \min_{R \in \SO(n)}{W(R\,;D)}
  = \frac{1}{2}\sum_{i=1}^k (d_{2i-1} - d_{2i})^2 + \sum_{i=2k+1}^n (d_i-1)^2\;.
\end{equation}
\end{reptheo}

We note in passing that the case of optimal rotations for recurring
parameter values $d_i$, $i = 1,\ldots,n$, in the diagonal parameter
matrix $D \in \Diag(n)$ has not been treated previously and is
accessible with the present approach.\footnote{This allows to
  treat special cases of equal principal stretches $\nu_i$ of
  the deformation gradient $F \in \GL^+(n)$
  which may arise, e.g., due to symmetry assumptions.}

\subsection{Algebraic solution strategy and state of the art}
\label{subsec:approach}
Let us lay out our solution strategy for~\probref{prob:opt}
and present the algebraic techniques which lie at the heart
of it. The Euler-Lagrange equations for the function
$W(R\,;D)$ have been previously derived in~\cite{Fischle:2015:OC3D}
and~\cite{Neff_Biot07}. These equations characterize the
critical points of the objective function $W(R\,;D)$
implicitly as solutions of a quadratic matrix equation
posed on the manifold of rotations $\SO(n)$ and parametrized
by the diagonal matrix $D = \diag(d_1,\ldots,d_n)$. The foundation
of our solution approach is the successful analysis of the following
equivalent algebraic condition
\begin{equation}
  (RD - \id)^2 \in \Sym(n)\;.
\end{equation}
This is a \emph{symmetric square condition}
\begin{equation}
  \left(X(R)\right)^2 \;=\; S \in \Sym(n),
  \quad\quad\text{where}\quad\quad
  X(R) \eqdef RD - \id \in \R^{n \times n}\;.
\end{equation}
Given this condition, one might suspect that the computation
of critical points of $W(R\,;D)$ is related to the theory of
real matrix square roots of real symmetric matrices. However,
the delicate properties of matrix square roots can, for the
most part, be avoided and we consider this an intriguing
aspect of our solution approach.

Matrix square roots are a classical theme in matrix analysis.
The most-well known example is the unique symmetric positive
definite, so-called principal, matrix square root of a
symmetric positive definite real square matrix. However, as
it turns out, a given real square matrix can have isolated and
non-isolated families of matrix square roots which can be real,
but are complex in general. A classification of all complex
square roots of a given complex matrix $A \in \C^{n \times n}$
in terms of its Jordan decomposition has been given by
Gantmacher, see~\cite{Gantmacher:1959:MT1}.
This classification can also be adapted to the real case, see,
e.g. Higham~\cite{Higham:1987:CRSQ}. Further treatments and
results on matrix square roots are given in the extensive
monographs~\cite{Horn85,Horn91,Higham:2008:FOM},
while \cite{Gallier:2008:LAS} provides a compact recent
introduction.

Our development is, however, more intuitively phrased
in terms of matrix squares, since we do not rely on the classical
theory of matrix square roots. For example, the key to
the analysis of \probref{prob:opt} is our \theref{theo:real_square_roots}
which states:
  \emph{every real matrix $X \in \R^{n\times n}$ whose
square $S = X^2$ is symmetric can be written in some
orthonormal basis as a block-diagonal matrix with blocks
of size at most two.} This statement does not seem to
appear in the literature. The construction of this
orthonormal basis was originally inspired by the theory
of principal angles between linear subspaces, see,
e.g.~\cite{Galantai:2013:PPM}. We emphasize
that the \emph{orthogonality} of the associated change of basis
matrix $T \in \O(n)$ is of utmost importance for our solution
strategy. We require the change of basis to preserve the
Frobenius matrix norm in~\probref{prob:opt}.
The block-diagonal structure in the new basis allows to break the
minimization problem down into subproblems of dimension at most
two. For example, we shall see that in $n\!=\!3$, for a
non-classical minimizer, we have to solve a one-dimensional and a
two-dimensional subproblem. The one-dimensional problem determines
the rotation axis of the optimal rotations, while the
two-dimensional subproblem determines the optimal rotation angles.

Let us briefly illuminate two similar constructions, which are
however insufficient for our purposes. We expand on these approaches
in the text. First, based on the characterization of the set of
complex matrix square roots due to Gantmacher~\cite{Gantmacher:1959:MT1},
one can construct an invertible change of basis matrix
$T_{\rm G} \in \GL(n)$ which is block-diagonal, but, in general,
not orthogonal; see~\remref{rem:gantmacher} for details.
Second, the numerical approximation of nonlinear matrix functions,
that is currently an important research theme, provides another
possible construction. In particular, our solution approach
for \probref{prob:opt} bears some
resemblance to the work of Higham underlying the computational
approximation of real matrix
square roots of real square matrices via their real Schur form,
see~\cite{Higham:1986:NMSQRT,Higham:1987:CRSQ}
and \cite{Higham:2008:FOM}.\footnote{For a geometric approach
and an account of interesting recent developments in the numerical
approximation of matrix square roots, see~\cite{Sra:2015:MSG}
and references therein.} Our~\exaref{exa:block_and_schur}
and~\remref{rem:real_schur_form} illustrate the relation
between \theref{theo:real_square_roots} and the real Schur
form.

In the next subsection we outline the mechanical background of
our problem. This part may be skipped by readers only interested
in the algebraic development.

\subsection{Optimal Cosserat microrotations and applications in mechanics}
\label{subsec:mechanics}
The term Cosserat theory describes a class of models in nonlinear
solid mechanics incorporating an additional field of rotations.
Such models are also referred to as micropolar
models; see~\cite{Eremeyev:2012:FMM} for an introduction
including extensive references. This type of models dates back
to the original work of the Cosserat brothers~\cite{Cosserat09}
in the early 1900s and was, historically, one of the first
generalized continuum theories.\footnote{The Cosserat brothers
  established the foundations of continuum mechanics with rotational
  degrees of freedom and contributed physically necessary invariance
  requirements for a micropolar continuum theory: the strain energy
  density $W$ in such a theory must be a function of the first
  Cosserat deformation tensor $\mstretch \eqdef \mrot^TF$.
  They never proposed a specific expression for the local strain
  energy density $W = W(\mstretch)$ to model specific materials.}

Let us consider a body $\Omega \subset \R^n$ which is
deformed by a diffeomorphism
$\varphi: \Omega \to \varphi(\Omega) \subset \R^n$
with deformation gradient field
$F \eqdef \nabla\varphi: \Omega \to \GL^+(n)$ and
let us denote the additional field of microrotations by
$\mrot: \Omega \to \SO(n)$. In this context, we introduce
the quadratic Cosserat shear-stretch strain energy density
\begin{equation}
  \wmm(\mrot\,;F) \;\eqdef\;
  \mu\, \hsnorm{\sym(\mrot^TF - \id)}^2
  \,+\,
  \mu_c\,\hsnorm{\skew(\mrot^TF - \id)}^2
\end{equation}
which can be evaluated at every point $x \in \Omega$. The function
$\wmm: \SO(n) \,\times\, \GL^+(n) \to \RPosZ$ depends on
$F = \nabla\varphi$ and $\mrot: \Omega \to \SO(n)$. Note that
in this text, we consider the deformation gradient $F$ as a
parameter, since our interest is to determine energy-minimizing
rotations. The weights $\mu > 0$ and $\mu_c \geq 0$ are given by
the Lam\'e shear modulus $\mu > 0$ from linear elasticity and
the Cosserat couple modulus $\mu_c \geq 0$, see~\cite{Neff_ZAMM05}
for a discussion. The chosen quadratic ansatz for
$W_{\mu,\mu_c}(\mrot\,;F)$ is motivated by a direct extension of
the quadratic energy in the linear theory of Cosserat models,
see, e.g.~\cite{Jeong:2009:NLIC,Neff_Jeong_bounded_stiffness09,
  Neff_Jeong_Conformal_ZAMM08}. It is a contribution to the
variational formulation of \emph{any} geometrically nonlinear,
isotropic, and quadratic Cosserat-micropolar continuum
model, see~\cite{Neff_Cosserat_plasticity05} and~\cite{Cosserat09,Eringen99,Maugin:1998:STPE}.

The polar factor $\polar(F) \in \SO(n)$ is obtained from the right
polar decomposition $F = \polar(F)\,U(F)$ of the deformation gradient
$F \in \GL^+(n)$. It describes the macroscopic rotation of the
continuum. Furthermore, $U(F) \eqdef \sqrt{F^TF} \in \Psym(n)$
describes the stretch and is referred to as the right Biot-stretch
tensor. We note that the singular values $\nu_i$, $i = 1,\ldots,n$,
of the deformation gradient $F \in \GL^+(n)$ are the eigenvalues of
the symmetric positive definite matrix $U \in \Psym(n)$.

It is quite natural to study matrix distance problems in nonlinear
continuum mechanics. Let us illustrate this for the example of the
Euclidean distance function
\begin{equation}
  \dist^2_{\rm Euclid}(F, \SO(n))
  \eqdef
  \min_{R \in \SO(n)}\hsnorm{F - R}^2.
\end{equation}
Conceptually, this function provides a local measure for the
distance of the deformation $\varphi: \Omega \to \varphi(\Omega)$
to the isometric (locally length-preserving) embeddings of the
body $\Omega$ into $\R^n$. The required invariance properties for
isotropy are automatically satisfied. Furthermore, this is consistent
with the requirement that a global isometry of a body
$\Omega \subset \R^n$, i.e., a rigid body motion, does not
produce any deformation energy, because
$F = \nabla (Rx + b) = R \in \SO(n)$
implies
$$
\int\nolimits_{\Omega} \dist^2_{\rm Euclid}(F, \SO(n))\;\dV = 0\;.
$$
Variations on this general theme lead to the study of corresponding
minimization problems on $\SO(n)$ which have been the subject
of multiple contributions, see, e.g.,~\cite{Fischle:2007:PCM,Neff_Osterbrink_hencky13,Fischle:2015:OC2D,Fischle:2015:OC3D,Neff:2014:LMP,Lankeit:2014:MML}.
Note that in classical nonlinear continuum models, the local rotation
of the specimen at a point is not explicitly accounted for
in the strain energy, due to the requirement of frame-indifference.
Thus, in a classical theory, the local rotation of the specimen
induced by a deformation mapping $\varphi$ is always given by the
continuum rotation $\polar(\nabla\varphi)$.

In strong contrast, in Cosserat theory and other generalized continuum
theories (so-called complex materials) with rotational degrees of
freedom, the local rotation $\mrot:\Omega \to \SO(n)$ of the material
appears explicitly. Accordingly, in such a theory, the computation of
locally energy-minimizing rotations provides geometrical insight into
the qualitative mechanical behavior of a particular constitutive model.
The first result in this area apparently dates back to 1940
when Grioli~\cite{Grioli40} proved a remarkable variational
characterization of the polar factor
$\polar(F) \in \SO(3)$
in dimension three. We present a generalization~\cite{Guidugli:1980:EPP}
to arbitrary dimension
\begin{align}
  \label{eq:dist_euclid_abs}
  \argmin{\mrot\,\in\,\SO(n)}{\hsnorm{\mrot^TF - \id}^2}
  &= \left\{ \polar(F) \right\}\;,\quad\text{and}\\
  \min_{\mrot\,\in\,\SO(n)}\hsnorm{\mrot^TF - \id}^2
  &= W_{\rm Biot}(F) \eqdef \hsnorm{U(F) - \id}^2\;.
\end{align}
Grioli's theorem shows that $\polar(F)$ is optimal for
the Cosserat strain energy minimized in \eqref{eq:dist_euclid_abs}.

\begin{rem}[Non-classical rotation patterns in nature]
    \label{rem:non_classical_rotations}
    Grioli's theorem implies that the strain energy minimized
    in \eqref{eq:dist_euclid_abs} can only be expected to
    generate a microrotation field
    $\mrot \approx \polar(\nabla \varphi)$,
    i.e., approximating the classical macroscopic continuum
    rotation.\footnote{Note also that $\mrot = \polar(\nabla \varphi)$
    realizes the classical Biot-energy $\hsnorm{U - \id}^2$.}
    However, non-classical rotation patterns $\mrot$ which deviate
    from $\polar(\nabla \varphi)$ are of interest, since they can
    be observed in many domains of nature. In metals, for example,
    the local rotation of the crystal lattice may differ from the
    continuum rotation considerably which is of importance on the
    meso- and nano-scale. Non-classical counter-rotations of the
    crystal lattice have been observed
    in~\cite{Zaafarani:2006:TITM,Zaafarani:2008:ODRP}
    below nanoindentations in copper single crystals.\footnote{The
      lattice misorientation can be measured by electron backscattered
      diffraction analysis (3D-EBSD).}
    This motivated an analysis of the relaxed polar
    factors in the setting of an idealized nanoindenation in a copper
    single crystal~\cite{Fischle:2016:RPNI}. Another application is in
    geomechanics, where non-classical rotational deformation modes
    may be observed, e.g., in landslides, see~\cite{Muench07_diss}
    and references therein. In the present work, we prove that
    the strain energy $W_{\mu,\mu_c}(R\,;F)$ defined in~\eqref{eq:intro:wmm}
    can produce non-classical microrotations
    $\mrot \approx \rpolar_{\mu,\mu_c}(F)$,
    see also~\cite{Fischle:2015:OC2D,Fischle:2015:OC3D,Fischle:2016:RPNI}.
\end{rem}

The energy minimized in \eqref{eq:dist_euclid_abs}
  is isotropic which allows for a quintessential simplification:
  it allows us to express Grioli's theorem in terms of rotations
  $R \in \SO(n)$ \emph{relative} to the orthogonal polar
  factor $\polar(F)$; see~\cite{Fischle:2015:OC3D} for details.
  In this relative picture, the deformation gradient $F$
  is represented by a diagonal matrix $D \eqdef \diag(d_1,\ldots,d_n)$.
  The entries $d_i = \nu_i > 0$ are the singular values of $F \in \GL^+(n)$.
  Grioli's theorem then takes the following equivalent form
\begin{align}
  \label{eq:dist_euclid_rel}
  \argmin{\mrot\,\in\,\SO(n)}{\hsnorm{RD - \id}^2}
  &= \left\{ \id \right\} \;,\quad\text{and}\\
  \min_{\mrot\,\in\,\SO(n)}\hsnorm{RD - \id}^2
  &= \norm{D - \id}^2\;.
\end{align}
The optimal rotation \emph{relative} to $\polar(F)$ is
given by the identity $\id$. Similarly, exploiting the isotropy of $W_{\mu,\mu_c}(\mrot\,;F)$
in \eqref{eq:intro:wmm}, we obtain the expression
\begin{equation}
  \label{eq:quad}
  \mu\,\hsnorm{\sym(RD - \id)}^2
  \,+\,
  \mu_c\,\hsnorm{\skew(RD - \id)}^2
\end{equation}
in terms of a rotation $R$ relative to $\polar(F)$. For the non-classical
parameter range
$\mu > \mu_c \geq 0$ (see~\cite{Fischle:2015:OC2D} for details),
a non-trivial parameter reduction proved
in~\cite[Lem. 2.2]{Fischle:2015:OC2D} shows that the corresponding
energy-minimizing rotations can be determined by solving~\probref{prob:opt}.

Let us briefly present a highly interesting logarithmic minimization
  problem. To this end, we introduce the logarithmic strain
  energy\footnote{Note that the most general quadratic expression
\begin{equation}
  \label{eq:quad_complete}
  \mu\,\hsnorm{\dev\sym(RD - \id)}^2
  \,+\,
  \mu_c\,\hsnorm{\skew(RD - \id)}^2
  \,+\,
  \,\frac{\kappa}{2}\,\left(\tr{(RD - \id)}\right)^2
\end{equation}
is also of a certain interest, see~\cite{Fischle:2015:OC3D} for a
discussion. The associated optimal rotations are not known to us.}
\begin{equation}
  \label{eq:log}
  W_{\rm log}(R\,;D) \eqdef
  \mu\,\hsnorm{\dev\sym\log(RD)}^2
  \,+\,
  \mu_c\,\hsnorm{\skew\log(RD)}^2
  \,+\,
  \,\frac{\kappa}{2}\,\left(\tr{\log(RD)}\right)^2\;,
\end{equation}
where $\kappa$ denotes the so-called bulk modulus. Technicalities aside,
one can prove that
\begin{align}
  \label{eq:min_log}
  \argmin{R\,\in\,\SO(n)}
         {W_{\rm log}(R\,;D)}
         &= \left\{\id\right\},\quad\text{and}\\
  \min_{R \in \SO(n)} {W_{\rm log}(R\,;D)}
         &= \mu\,\hsnorm{\dev\log D}^2
          + \frac{\kappa}{2}\,\left(\tr{\log D}\right)^2\;,
\end{align}
see~\cite{Neff:2014:LMP,Lankeit:2014:MML,Borisov:2015:SSLI} for
proofs and essential details. Note that these results are closely
related to the geometric observation that certain natural geodesic
distances from $F \in \GL^+(n)$ to the subgroup $\SO(n)$ induce
Hencky-type strain energies~\cite{Neff:2015:GLS}.

The Euclidean distance problem \eqref{eq:dist_euclid_rel}
and the minimization problem for the logarithmic energy \eqref{eq:min_log}
share a remarkable property: the identity $\id \in \SO(n)$ is always
uniquely optimal for any diagonal positive definite $D > 0$. Equivalently,
the polar factor $\polar(F)$ is always the optimal absolute rotation.
In view of~\remref{rem:non_classical_rotations}, we note that the
logarithmic energy produces only classical microrotation patterns, i.e.,
$\mrot \approx \polar(F)$.

In strong contrast, for $\mu > \mu_c \geq 0$, the quadratic Cosserat
shear-stretch energy density \eqref{eq:quad} can produce interesting
non-classical
rotations~\cite{Fischle:2015:OC2D,Fischle:2015:OC3D,Fischle:2016:RPNI},
i.e.,
\begin{equation}
  \argmin{R\,\in\,\SO(n)}
         {\left(\mu\,\hsnorm{\sym(RD - \id)}^2
           \,+\,
           \mu_c\,\hsnorm{\skew(RD - \id)}^2\right)}
         \;\neq\; \{\id\}
\end{equation}
for the optimal relative rotation. Equivalently, it is possible that
\begin{equation}
  \rpolar_{\mu,\mu_c}(F) \;\neq\; \{\polar(F)\}
\end{equation}
which means that the optimal Cosserat microrotations can produce
non-classical rotation patterns, see also~\remref{rem:non_classical_rotations}. This interesting property is an immediate consequence of the
characterization of $\rpolar(D)$ which we prove in the present work.

{\it This paper is structured as follows: after this introduction
  in~\secref{sec:intro}, we proceed
    to~\secref{sec:characterization} which presents a construction
    of an orthonormal basis for any real matrix whose square is
    symmetric such that it takes block-diagonal form with blocks
    of size at most two.
    This block structure allows us to characterize the critical
    points in~\secref{sec:critical_points} for arbitrary
  dimension $n$. This leads to a sequence of decoupled one- and
  two-dimensional subproblems posed, however, on $\O(1)$ and $\O(2)$
  and we continue with the solution of these subproblems
  in~\secref{sec:subproblems}. In~\secref{sec:optimality}
  we extract the globally energy-minimizing optimal Cosserat rotations
  from the set of critical points by a comparison of the realized
  energy levels. It turns out that the optimal rotations and energy
  levels are entirely consistent with previous results for $n = 2,3$.
  We end with a short discussion of the present results
  in~\secref{sec:discussion}.
}
 \countres
\makeatletter{}\section{A block-diagonal representation of real matrices with a real symmetric square}
\label{sec:characterization}

In this section, we present the construction of an orthogonal change
of basis $T \in \O(n)$ for real matrices $X \in \R^{n \times n}$ with
a real symmetric square $S \eqdef X^2 \in \Sym(n)$. The constructed
change of basis preserves the Frobenius matrix norm which allows us to
reduce \probref{prob:opt} to lower-dimensional subproblems.

Let us introduce
\begin{defi}
  We say that $X \in \R^{n \times n}$
  is a real matrix square root of the real symmetric matrix
  $S \in \Sym(n)$, if it solves the quadratic matrix
  equation
  $$X^2 = S \in \Sym(n)\;.$$
\end{defi}
For existence of complex matrix square roots and their
classification, see~\cite{Gantmacher:1959:MT1}
and \cite{Higham:2008:FOM}. The theory of real matrix square
roots is considered in \cite{Higham:1987:CRSQ}.

\begin{exa}
  The identity matrix $\id_2 \in \Sym(2)$ has infinitely many real
  matrix square roots which are simply size two involution matrices.
  They fall into three distinct classes according to their trace:
  \begin{equation}
    X = \id,\quad\quad
    X = -\id,\quad\quad
    X \in \left\{ \begin{pmatrix} a & b\\ c & -a\end{pmatrix},\; a^2 + b\,c = 1\right\}\;.
  \end{equation}
\end{exa}

\begin{exa}
  The negative identity matrix $-\id_2 \in \Sym(2)$ has the real matrix
  square roots
  \begin{equation}
    X \in \left\{ \begin{pmatrix} a & b\\ c & -a\end{pmatrix},\; a^2 + b\,c = -1\right\}\;.
  \end{equation}
\end{exa}

\begin{exa}
  A negative identity matrix of odd size $-\id \in \Sym(2k-1)$
  does not have real matrix square roots, since its determinant
  is negative.
\end{exa}

We now provide a simple criterion for $X \in \R^{2 \times 2}$
to be a square root of \emph{some} symmetric real
matrix, i.e., $X^2 = (X^T)^2$.
This will be useful in~\secref{sec:subproblems}.
\begin{lem}
  \label{lem:root_2x2}
  A real matrix $X \in \R^{2 \times 2}$ is a real matrix square root
  of a real symmetric matrix $S = X^2 \in \Sym(2)$ if and only if
  $X \in \Sym(2)$ or $\tr{X} = 0$.
\end{lem}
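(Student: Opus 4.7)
The plan is to exploit the fact that in dimension two the Cayley-Hamilton theorem yields a very explicit expression for $X^2$. For any $X \in \R^{2\times 2}$ one has
\begin{equation*}
X^2 \;=\; \tr{X}\cdot X \,-\, \det{X}\cdot \id.
\end{equation*}
Since $\det{X}\cdot \id$ is symmetric, the condition $X^2 \in \Sym(2)$ is equivalent to $\tr{X}\cdot X \in \Sym(2)$. This is the key reduction, and it is the only place where dimension two really enters.

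From there, a trivial case distinction finishes the forward direction. Either $\tr{X} = 0$, in which case the second alternative in the lemma holds; or $\tr{X} \neq 0$, and then $\tr{X}\cdot X \in \Sym(2)$ forces $X \in \Sym(2)$, the first alternative. The converse is immediate: if $X \in \Sym(2)$ then $X^2 \in \Sym(2)$ trivially, and if $\tr{X} = 0$ the same Cayley-Hamilton identity collapses to $X^2 = -\det{X}\cdot \id$, a scalar multiple of the identity, which is symmetric.

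I do not anticipate any real obstacle; the whole argument is a one-line consequence of the two-dimensional Cayley-Hamilton identity together with an if-then on whether the scalar $\tr{X}$ vanishes. As a sanity check, one can alternatively verify the lemma by decomposing $X = \sym(X) + \skew(X) \eqdef S + A$, observing that $X^2 - (X^T)^2 = 2(SA + AS)$, and computing in $2\times 2$ coordinates that $SA + AS = a\,\tr{S}\,J$ when $A = a\,J$ with $J = \bigl(\begin{smallmatrix}0 & 1 \\ -1 & 0\end{smallmatrix}\bigr)$; this again yields the dichotomy $a = 0$ (so $X \in \Sym(2)$) or $\tr{S} = \tr{X} = 0$. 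The reason no similar one-line characterization is available for $n \geq 3$ is precisely what forces the use of the block-diagonal structure developed through \theref{theo:real_square_roots} in the rest of the paper.
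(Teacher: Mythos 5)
Your argument is correct and is essentially the paper's own proof: both hinge on the two-dimensional Cayley--Hamilton identity $X^2 = \tr{X}\,X - (\det X)\,\id$, observe that the determinant term is symmetric, and read off the dichotomy $\tr{X}\,\skew(X) = 0$. The explicit converse and the $X = \sym(X) + \skew(X)$ sanity check you add are harmless elaborations of what the paper leaves implicit.
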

\begin{proof}
  The Cayley-Hamilton theorem implies that
  $$
  S = X^2 = \tr{X}\,X - (\det X)\,\id\;.
  $$
  Since the square $S = X^2$ is symmetric, we have
  $$
  0 = \skew(S) = \skew(\tr{X}\,X - (\det X)\,\id) = \tr{X}\,\skew(X)\;.
  $$
  This finishes the argument.
\end{proof}

In what follows, we write $E_{\lambda_i} \subseteq \R^n$ to denote the
maximal real eigenspace of a real symmetric matrix $S \in \Sym(n)$
associated to a given eigenvalue $\lambda_i$, $i = 1,\ldots,m$.

\begin{lem}[Eigenspaces of $S = X^2 \in \Sym(n)$ are preserved by $X$]
  \label{lem:no_mix}
  Let $X \in \R^{n \times n}$ with symmetric square
  $S \eqdef X^2 \in \Sym(n)$ and let $\lambda \in \R$ be an
  eigenvalue of $S$. Then $X$ preserves the eigenspace
  $E_{\lambda}$ of its square $S$, i.e.,
  $$
  X E_{\lambda} \subseteq E_{\lambda}\;.
  $$
\end{lem}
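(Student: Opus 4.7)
The plan is very short, because the result is essentially the standard fact that an operator preserves the eigenspaces of any operator with which it commutes, specialized to the operator $S=X^2$.

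First I would observe that $X$ commutes with its own square: $XS = X\cdot X^2 = X^3 = X^2\cdot X = SX$. This is the only algebraic content that is needed, and nowhere do we use that $S$ is symmetric — the symmetry of $S$ only enters to guarantee that the real eigenspaces $E_\lambda$ are honest (semi-simple, orthogonally complemented) subspaces in the first place, but the inclusion $XE_\lambda\subseteq E_\lambda$ holds for any eigenspace of any polynomial in $X$.

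Next I would take an arbitrary $v\in E_\lambda$, so that $Sv=\lambda v$, and compute
\begin{equation*}
  S(Xv) \;=\; (SX)v \;=\; (XS)v \;=\; X(Sv) \;=\; X(\lambda v) \;=\; \lambda\,(Xv),
\end{equation*}
which shows $Xv\in E_\lambda$ and hence $XE_\lambda\subseteq E_\lambda$.

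There is no real obstacle here; the statement is a one-line consequence of $[X,S]=0$. The only thing worth flagging is that the lemma as stated is slightly stronger than strictly needed later: it gives an $X$-invariant orthogonal decomposition $\R^n=\bigoplus_i E_{\lambda_i}$ (by the spectral theorem applied to $S$), and this is precisely what will make it possible, in the subsequent construction, to reduce to the situation where $S=X^2$ has a single real eigenvalue on each invariant block before building the orthogonal change of basis $T\in\O(n)$.
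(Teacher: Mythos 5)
Your proof is correct and is exactly the paper's argument: the paper's entire proof is the one-line observation that $X$ and $X^2$ commute, and you have simply written out the standard consequence $S(Xv) = X(Sv) = \lambda Xv$. Your side remarks (that symmetry of $S$ is only used to make the eigenspaces well-behaved, and that the spectral decomposition of $S$ is what drives the later block construction) are accurate and match the role the lemma plays in the paper.
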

\begin{proof}
  The operators $X$ and $X^2$ commute.
\end{proof}

\begin{cor}
  \label{cor:square_root_eigenblocks}
  Let $X \in \R^{n \times n}$ be a real matrix
  square root of $S = X^2 \in \Sym(n)$ and let $T \in \O(n)$ such that
  $$
  \widetilde{S} \eqdef T^{-1}ST
  = \diag(\underbrace{\lambda_1,\ldots,\lambda_1}_{\dim E_{\lambda_1}},
  \underbrace{\lambda_2,\ldots,\lambda_2}_{\dim E_{\lambda_2}},
  \ldots,\underbrace{\lambda_m,\ldots,\lambda_m}_{\dim E_{\lambda_m}})\;.
  $$
  Then the transformed matrix
  \begin{equation}
    \widetilde{X} \eqdef T^{-1}XT = \diag(\widetilde{X}_{1},\widetilde{X}_{2},\ldots,\widetilde{X}_{m})
  \end{equation}
  is block-diagonal with square blocks $\widetilde{X}_i$ of size
  $\dim E_{\lambda_i}$, $i = 1,\ldots,m$, that satisfy
  $$
  \widetilde{X}_i^2 = \widetilde{S}_i = \lambda_i \id\;.
  $$
\end{cor}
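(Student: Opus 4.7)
The plan is to transfer the problem into the eigenbasis of $S$ supplied by $T$ and then read off both the block structure of $\widetilde{X}$ and the identity $\widetilde{X}_i^2 = \lambda_i\,\id$ directly from \leref{lem:no_mix} together with $X^2 = S$.

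First I would fix the orthogonal $T \in \O(n)$ as described: its columns form an orthonormal basis of $\R^n$ ordered so that the first $\dim E_{\lambda_1}$ columns span $E_{\lambda_1}$, the next $\dim E_{\lambda_2}$ columns span $E_{\lambda_2}$, and so on. Such a $T$ exists because $S \in \Sym(n)$ is orthogonally diagonalisable and $\R^n = \bigoplus_{i=1}^m E_{\lambda_i}$ is an orthogonal direct sum with respect to the standard inner product. By construction, $T^{-1}ST$ takes the claimed grouped-diagonal form.

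Next I would use \leref{lem:no_mix}, which guarantees $XE_{\lambda_i} \subseteq E_{\lambda_i}$ for every $i$. In the chosen basis, the matrix $\widetilde{X} = T^{-1}XT$ represents the linear map $X$, and its $(j,i)$-block (with $i$ indexing the domain eigenspace and $j$ the codomain eigenspace) vanishes whenever $j \neq i$, precisely because $X$ sends $E_{\lambda_i}$ into itself and the basis is adapted to the decomposition $\bigoplus_i E_{\lambda_i}$. Hence $\widetilde{X} = \diag(\widetilde{X}_1,\ldots,\widetilde{X}_m)$ with each $\widetilde{X}_i$ of size $\dim E_{\lambda_i}$.

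Finally, I would square the block-diagonal matrix: $\widetilde{X}^2 = \diag(\widetilde{X}_1^2,\ldots,\widetilde{X}_m^2)$. On the other hand, $\widetilde{X}^2 = T^{-1}XT\,T^{-1}XT = T^{-1}X^2T = T^{-1}ST = \widetilde{S}$, whose $i$-th diagonal block is $\lambda_i\,\id$ of size $\dim E_{\lambda_i}$. Comparing the two block-diagonal expressions yields $\widetilde{X}_i^2 = \lambda_i\,\id$ for each $i$, which completes the argument. I do not anticipate a genuine obstacle here; the only care needed is to make the indexing of columns of $T$ match the grouping of eigenvalues in $\widetilde{S}$ and to invoke $T^{-1} = T^T$ implicitly to preserve the Frobenius-norm identifications the paper needs later.
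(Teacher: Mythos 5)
Your argument is correct and matches what the paper implicitly intends: the corollary carries no proof in the text precisely because it follows directly from Lemma~\ref{lem:no_mix} (invariance $XE_{\lambda_i}\subseteq E_{\lambda_i}$ forces block-diagonality in the adapted orthonormal basis), and squaring the block-diagonal $\widetilde X$ against $T^{-1}X^2T=\widetilde S$ gives $\widetilde X_i^2=\lambda_i\,\id$. One small remark: it is worth noting explicitly that the basis adapted to $\bigoplus_i E_{\lambda_i}$ forces off-diagonal blocks in \emph{both} row-positions to vanish (not merely a block-triangular form), which holds because $X$ maps each $E_{\lambda_i}$ into itself and the $E_{\lambda_i}$ are mutually orthogonal for the symmetric $S$; you do state this, but it is the one spot a careless reader might miss.
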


\begin{rem}
  The preceding \coref{cor:square_root_eigenblocks} reduces
  the subsequent characterization of real matrix square roots of
  symmetric matrices significantly, because it shows
  that it suffices to consider each of the $X$-invariant eigenspaces
  $E_{\lambda_i}$, $i = 1,\ldots,m$, of $S$
  \emph{individually}.
\end{rem}

  Our next step is to construct a suitable \emph{orthonormal} basis
  for each individual eigenspace $E_{\lambda_i}$, $i = 1,\ldots,m$, of
  $S \in \Sym(n)$. As we shall see in the following, this yields a
  change of basis with transition matrix $T \in \O(n)$ which is
  \emph{adapted} to the structure of \probref{prob:opt}. In particular,
  it preserves the Frobenius matrix norm.

  Before we proceed with our proposed construction, we want to briefly
  describe an intuitive and \emph{apparently} similar
  (but completely different) approach relying on the classification
  of the set of all complex matrix square roots due to
  Gantmacher~\cite{Gantmacher:1959:MT1}.
  \begin{rem}[Lack of orthogonality in Gantmacher's representation]
    \label{rem:gantmacher}
    To make our point, it suffices to consider Gantmacher's
    classification of the complex matrix square roots for
    non-degenerate $A \in \GL(n,\C)$. We follow the exposition
    of Higham~\cite[Thm. 1.24]{Higham:2008:FOM} including the
    notation. In our setting $A = \widetilde{S}$ is real diagonal,
    i.e., it is in Jordan canonical form with Jordan blocks
    of size one. Due to Gantmacher's classification, all
    \emph{complex} matrix square roots $X \in \C^{n \times n}$
    which satisfy $X^2 = A = \widetilde{S}$ are of the form    \footnote{Here, we use the multi-valued convention to
      denote the set of all complex square roots by the
      symbol $\sqrt{\cdot}$.}
  \begin{equation}
    X = U\diag(\sqrt{\lambda_1},\ldots,\sqrt{\lambda_m})U^{-1}\;,
  \end{equation}
  where $U \in \GL(n,\C)$ is arbitrary, but required to commute
  with the Jordan matrix $A = \widetilde{S}$, i.e.,
  \begin{equation}
    U\diag(\lambda_1,\ldots,\lambda_m) = \diag(\lambda_1,\ldots,\lambda_m)U\;.
  \end{equation}
  Consider now a \emph{real} matrix square root $X \in \R^{n \times n}$.
  It is not hard to see that the  real Jordan form of $X$ is obtained
  by an invertible change of basis matrix
  $T_{\rm G} \in \GL(n,\R)$ with the property that
  $\widetilde{X} = T_{\rm G}^{-1}X T_{\rm G} \in \R^{n\times n}$
  is block-diagonal with blocks of size one and two. Here, the
  blocks of size one correspond to positive eigenvalues of
  $A = \widetilde{S}$ and the blocks of size two correspond to
  negative eigenvalues of $A = \widetilde{S}$. Unfortunately,
  $T_{\rm G} \in \GL(n,\R)$ is in general not orthogonal
  and the change of basis does not preserve the Frobenius
  matrix norm in~\probref{prob:opt}.
  \end{rem}
Let us now proceed with the construction of a suitable \emph{orthogonal}
change of basis matrix $T \in \O(n)$ on which our subsequent analysis
of \probref{prob:opt} is based.

We briefly recall the definition of the orthogonal complement
$V^\perp$ of a linear subspace $V \subseteq \R^n$,
  $$
  V^\perp \;\eqdef\; \setdef{w \in \R^n}{w \perp V}
  = \setdef{w \in \R^n}{\forall v \in V: \scalprod{v}{w} = 0}\;,
  $$
  which induces an orthogonal decomposition of
  $\R^n = V \oplus_\perp V^\perp$. In what follows, we
  exploit the well-known fact that for $Y \in \R^{n \times n}$,
  \begin{equation}
    \label{eq:inv_vperp}
    Y V^\perp \subseteq V^\perp \quad \isequivto \quad Y^TV \subseteq V\;.
  \end{equation}
  Indeed, let $w \in V^\perp$, then
  $0 = \scalprod{Y w}{v} = \scalprod{w}{Y^Tv}$.
  Since the choice of $w \in V^\perp$ was arbitrary, we have that
  $Y^Tv \perp V^\perp$ which shows $Y^Tv \in V$,
  because $\R^n = V \oplus_\perp V^\perp$. The reverse implication
  is completely analogous.

\begin{lem}[Block lemma]
  \label{lem:real_square_roots}
  Let $Y \in \R^{n \times n}$ with square $S \eqdef Y^2 = \lambda \id$, $\lambda \in \R$.
  Then there exists an orthogonal transformation $T \in \O(n)$ such that
  the matrix
  \begin{align}
    \widetilde{Y} \eqdef T^{-1}YT
    = \diag(\widetilde{Y}_1,\widetilde{Y}_2,\ldots,\widetilde{Y}_r)
    = \begin{pmatrix}
    \widetilde{Y}_1  & 0               & \ldots & 0\\
    0                & \widetilde{Y}_2 & \ddots & \vdots\\
    \vdots           & \ddots          & \ddots & 0\\
    0                & \ldots          & 0      & \widetilde{Y}_r
  \end{pmatrix}
  \end{align}
  is block-diagonal, with blocks $\widetilde{Y}_i$, $i = 1,\ldots,r$,
  of size one or two satisfying $\widetilde{Y}_i^2 = \lambda \id$.
\end{lem}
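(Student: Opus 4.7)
My plan is to proceed by induction on $n$; the base case $n \leq 2$ is immediate with $T = \id$. For the inductive step with $n \geq 3$, I will exhibit a subspace $V \subseteq \R^n$ of dimension at most two that is invariant under both $Y$ and $Y^T$. Once such a $V$ is available, the equivalence \eqref{eq:inv_vperp} gives $Y V^\perp \subseteq V^\perp$, so $\R^n = V \oplus V^\perp$ is an orthogonal decomposition into $Y$-invariant subspaces. The restriction $Y|_{V^\perp}$ still satisfies $(Y|_{V^\perp})^2 = \lambda\,\id$, so I can apply the induction hypothesis on $V^\perp$ and concatenate the resulting orthonormal basis with any orthonormal basis of $V$ to produce the desired $T \in \O(n)$.

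To construct $V$, I will pick a real eigenvector $v \neq 0$ of the symmetric matrix $P \eqdef Y + Y^T$, so that $Pv = \mu v$ for some $\mu \in \R$, and set $V \eqdef \spanop\{v,\,Yv\}$. Then $\dim V \in \{1,2\}$, and $V$ is clearly $Y$-invariant since $Y(Yv) = Y^2 v = \lambda v \in V$. For $Y^T$-invariance I need $Y^T v \in V$ and $Y^T(Yv) \in V$. The first is immediate: $Y^T v = Pv - Yv = \mu v - Yv \in V$. The second will follow from the identity
\begin{equation*}
  P^2 \;=\; Y^2 + YY^T + Y^TY + (Y^T)^2 \;=\; 2\lambda\,\id + YY^T + Y^TY,
\end{equation*}
which, combined with $P^2 v = \mu^2 v$, yields $(YY^T + Y^TY)\,v = (\mu^2 - 2\lambda)\,v$. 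Since also $YY^T v = Y(\mu v - Yv) = \mu\, Yv - \lambda\, v$, subtraction gives
\begin{equation*}
  Y^T(Yv) \;=\; (\mu^2 - \lambda)\, v \,-\, \mu\, Yv \;\in\; V,
\end{equation*}
closing up the $Y^T$-invariance.

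The main obstacle I anticipate is exactly this closure of $V$ under $Y^T$: it is not a consequence of any commutativity between $P$ and $Y$ (which typically fails), but rather uses the assumption $Y^2 = \lambda\,\id$ twice---once on $v$ itself and once inside the expansion of $P^2$. It is also essential that $v$ is taken as an eigenvector of the symmetric operator $P$ rather than of $Y$ itself, since $Y$ need not admit any real eigenvector when $\lambda < 0$. Given this $V$, the restriction $Y|_V$ is a block of size one or two whose square equals $\lambda\,\id$, and iterating the construction on $V^\perp$ completes the induction.
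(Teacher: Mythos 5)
Your proof is correct, and it takes a genuinely different route than the paper's. The paper also constructs a one- or two-dimensional subspace $V$ that is invariant under both $Y$ and $Y^T$ and then inducts on $n$, but it builds $V$ from a common eigenvector $w$ of the commuting symmetric operators $YY^T$ and $Y^TY$, and verifies the two-sided invariance through a case analysis (three cases, plus two subcases when $\lambda = 0$) depending on whether $w$ is an eigenvector of $Y$ and/or of $Y^T$. You instead take a real eigenvector $v$ of the symmetric operator $P \eqdef Y + Y^T$ with $Pv = \mu v$ and set $V = \vspan{v,\,Yv}$; the $Y^T$-invariance then closes in a single computation built on the identity $P^2 = 2\lambda\,\id + YY^T + Y^TY$ (which uses $Y^2 = (Y^T)^2 = \lambda\,\id$) together with $Y^Tv = \mu v - Yv$ and $YY^Tv = \mu\,Yv - \lambda\,v$, giving $Y^T(Yv) = (\mu^2-\lambda)v - \mu\,Yv \in V$. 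This eliminates the paper's case split entirely, handles $\lambda = 0$ with no special treatment, and makes the role of the hypothesis $Y^2 = \lambda\,\id$ completely transparent. Your remark that one must use $P$ rather than $Y$ to guarantee a real eigenvector (since $Y$ may have none when $\lambda < 0$) is exactly the right caveat. The remaining bookkeeping — that $YV^\perp \subseteq V^\perp$ follows from \eqref{eq:inv_vperp}, that $(Y|_{V^\perp})^2 = \lambda\,\id$, and that concatenating orthonormal bases of $V$ and $V^\perp$ yields the orthogonal $T$ — is stated correctly and matches the paper's inductive framework. In short: same inductive skeleton, cleaner and case-free construction of the invariant subspace.
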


\begin{proof}
  The proof proceeds by induction on $n$. The base case of induction
  $n \in \{1,2\}$ holds, since $Y$ is already block-diagonal
  with blocks of size one or two. For the induction step let us assume
  that the statement holds for matrices of size $n-1$ and $n-2$.
  \\
  \smallskip
  Our strategy is to prove the existence of a one- or two-dimensional
  subspace $V$ of $\R^n$ such that both $V$ and its orthogonal
  complement $V^\perp$ are left invariant by $Y$, i.e.,
\begin{equation}
  \dim V \in \{1,2\},\quad Y V \subseteq V
  \quad\text{and}\quad
  Y V^\perp \subseteq V^\perp\;.
\end{equation}
Thus, if we pick an orthonormal basis of $V$ and $V^\perp$,
this is equivalent to the statement that orthogonal conjugates
of $Y$ and $Y^T$ are block matrices of the form
\begin{equation}
  \label{eq:blockmat_induction}
  Q^{-1}YQ =
  \left(\begin{array}{c|c}
    \widetilde{Y}_1 & 0\\
    \hline
    0               & Z
  \end{array}\right)\;.
\end{equation}
Equivalently, $V$ is invariant under both $Y$ and $Y^T$.
Since $(Q^{-1}YQ)^2 = Q^{-1}Y^2Q = \lambda\id$, we get
$Z^2 = \lambda \id$, so by the induction assumption there exists
$T_0$ such that
\begin{equation}
  T_0^{-1}ZT_0 =
  \begin{pmatrix}
    \widetilde{Z}_1  & 0               & \ldots & 0\\
    0                & \widetilde{Z}_2 & \ddots & \vdots\\
    \vdots           & \ddots          & \ddots & 0\\
    0                & \ldots          & 0      & \widetilde{Z}_s
  \end{pmatrix}\;.
\end{equation}
Then the orthogonal matrix
\begin{equation}
  T = Q\begin{pmatrix}\id & 0\\ 0 & T_0\end{pmatrix} \in \O(n)
\end{equation}
satisfies
\begin{equation}
  T^{-1}YT =
  \begin{pmatrix}
    \widetilde{Y}_1  & 0               & \ldots & 0\\
    0                & \widetilde{Z}_1 & \ddots & \vdots\\
    \vdots           & \ddots          & \ddots & 0\\
    0                & \ldots          & 0      & \widetilde{Z}_s
  \end{pmatrix}\;,
\end{equation}
which completes the induction step.
\smallskip

To finish the argument, we have to construct a $Y$- and $Y^T$-invariant
subspace $V$ of $\R^n$ of dimension one or two.\\
\smallskip

Since $(Y^T)^2 = S^T = S = \lambda\id$, the symmetric matrices
$YY^T$ and $Y^TY$ commute
\begin{equation}
  (YY^T)(Y^TY) = Y(Y^T)^2Y = Y (\lambda\id) Y = \lambda S = \lambda^2\id = (Y^TY)(YY^T).
\end{equation}
Therefore, the operators $YY^T$ and $Y^TY$ are simultaneously
diagonalizable and we can find a common eigenvector $w$ of both.
Let us normalize $w$ so that $\norm{w} = 1$ and note that there
exist values $\alpha, \beta \in [0,\infty)$ satisfying
\begin{equation}
  Y^TYw \;=\; \alpha w
  \quad\quad \text{and} \quad\quad
  YY^Tw \;=\; \beta w\;.
\end{equation}
Our next step is to choose the invariant subspace $V$. We have to
distinguish several cases.\\[2em]
{\bf Case 1:} $Yw \in \vspan{w}, Y^Tw \in \vspan{w}$, in other words,
  $w$ is an eigenvector of $Y$ and $Y^T$. We select $V = \vspan{w}$
  and construct an orthogonal matrix with first column given by
  $q_1 = w$, i.e.,
  \begin{equation}
    Q = (w | q_2 | \ldots | q_n) \in \O(n)\;.
  \end{equation}
  An associated change of basis for $Y$ and $Y^T$ introduces
  the following zero patterns
  \begin{equation}
    Q^{-1}YQ =
    \begin{pmatrix}
      *     & \vline && * &   &\\
      \hline
      0     & \vline &&   &   &\\
      \cdot & \vline &&   &   &\\
      \cdot & \vline && * &   &\\
      \cdot & \vline &&   &   &\\
      0     & \vline &&   &   &
    \end{pmatrix}
    \quad\text{and}\quad
    Q^{-1}Y^TQ =
    \begin{pmatrix}
      *     & \vline && * &   &\\
      \hline
      0     & \vline &&   &   &\\
      \cdot & \vline &&   &   &\\
      \cdot & \vline && * &   &\\
      \cdot & \vline &&   &   &\\
      0     & \vline &&   &   &
    \end{pmatrix}\;.
  \end{equation}
  Since $Q^{-1} Y^T Q = (Q^{-1} Y Q)^T$ these matrices are transposes
  of each other which implies that we obtain a block matrix of
  the form
  \begin{equation}
    Q^{-1} Y Q =
    \begin{pmatrix}
      *     & \vline & 0 & \cdot & \cdot & 0\\
      \hline
      0     & \vline &   &   &   &\\
      \cdot & \vline &   &   &   &\\
      \cdot & \vline &   & * &   &\\
      \cdot & \vline &   &   &   &\\
      0     & \vline &   &   &   &
    \end{pmatrix}\;,
  \end{equation}
  which is of the form described in \eqref{eq:blockmat_induction}.\\[2em]

  {\bf Case 2:} $Yw \in \vspan{w}, Y^Tw \notin \vspan{w}$, in other words $w$ is an
  eigenvector of $Y$ but not of $Y^T$. Consider the subspace
  $V = \vspan{w, Y^Tw}$. Then the image of $V$ under $Y$ satisfies
  \begin{align}
    YV   &= \vspan{Yw,YY^Tw} \subseteq \vspan{w,w} \subseteq V\\
    Y^TV &= \vspan{Y^Tw, (Y^T)^2w} \subseteq \vspan{Y^Tw,\lambda w} \subseteq V\;.
  \end{align}
  We now pick an orthonormal basis $w_1,w_2$ of
  $V = \vspan{w_1,w_2} = \vspan{w, Y^Tw}$ and extend it to an orthogonal
  matrix
  \begin{equation}
    Q = (w_1| w_2 | q_3 | \ldots | q_n) \in \O(n)\;.
  \end{equation}
  Then, similar to Case 1, an associated change of basis for $Y$ and
  $Y^T$ introduces a zero pattern
  \begin{equation}
    Q^{-1}YQ =
    \begin{pmatrix}
      *     & *     & \vline &        & * &   &\\
      *     & *     & \vline &        & * &   &\\
      \hline
      0     & 0     & \vline &        &   &   &\\
      \cdot & \cdot & \vline &        &   &   &\\
      \cdot & \cdot & \vline &        & * &   &\\
      \cdot & \cdot & \vline &        &   &   &\\
      0     & 0     & \vline &        &   &   &
    \end{pmatrix}
    \quad\text{and}\quad
    Q^{-1}Y^TQ =
    \begin{pmatrix}
      *     & *     & \vline &        & * &   &\\
      *     & *     & \vline &        & * &   &\\
      \hline
      0     & 0     & \vline &        &   &   &\\
      \cdot & \cdot & \vline &        &   &   &\\
      \cdot & \cdot & \vline &        & * &   &\\
      \cdot & \cdot & \vline &        &   &   &\\
      0     & 0     & \vline &        &   &   &
    \end{pmatrix}\;.
  \end{equation}
  As before, since $Q^{-1} Y^T Q = (Q^{-1} Y Q)^T$ the two matrices
  are transposes of each other which creates a $2$-block in the
  upper left corner
  \begin{equation}
    Q^{-1} Y Q =
    \begin{pmatrix}
      *     & *     & \vline & 0 & \cdots & 0  &\\
      *     & *     & \vline & 0 & \cdots & 0  &\\
      \hline
      0     & 0     & \vline &   &   &   &\\
      \cdot & \cdot & \vline &   &   &   &\\
      \cdot & \cdot & \vline &   & * &   &\\
      \cdot & \cdot & \vline &   &   &   &\\
      0     & 0     & \vline &   &   &   &
    \end{pmatrix}\;,
  \end{equation}
  which is of the form described in \eqref{eq:blockmat_induction}.\\[2em]
  {\bf Case 3:} $Yw \notin \vspan{w}$, in other words $w$ is \emph{not} an
  eigenvector of $Y$. We consider the subspace $V = \vspan{w, Yw}$. The
  inclusion
  \begin{equation}
    YV = \vspan{Yw, Y^2w} = \vspan{Yw, \lambda w} \subseteq V
  \end{equation}
  is immediate. In order to prove the invariance $Y^TV \subseteq V$, we
  need to consider the following two subcases:\\[1em]
  {\bf Case 3a:} $\lambda \neq 0$. In this case $Y$ and $Y^T$ are invertible
  and so $Y^TYw = \alpha w$ with $\alpha > 0$. This allows us to express
  $w$ as follows
  \begin{equation}
    \left(\frac{1}{\alpha} Y^TY\right)w = \frac{\alpha}{\alpha}\, w = w\;.
  \end{equation}
  We have to compute
  \begin{equation}
    Y^T V = \vspan{Y^Tw,Y^TYw} = \vspan{Y^Tw,\alpha w}\;.
  \end{equation}
  To this end, we expand
  \begin{equation}
  Y^Tw = Y^T\left(\frac{1}{\alpha}Y^T Y\right)w
       = \frac{1}{\alpha}\,(Y^2)^T Yw
       = \frac{1}{\alpha}\, S^TYw
       = \frac{1}{\alpha}\, Y^3w
       = \frac{1}{\alpha}\, YSw
       = \frac{\lambda}{\alpha}\, Yw \in V
  \end{equation}
  which shows that $Y^TV \subseteq V$.\\[2em]
  {\bf Case 3b:} $\lambda = 0$. Consider the product
  \begin{equation}
  \left(Y^TY\right)\left(YY^T\right)w
  = Y^TSY^Tw
  = S^2w
  = \lambda^2 w
  = 0\;.
  \end{equation}
  Since we also have, $Y^TYw = \alpha w$ and $YY^Tw = \beta w$,
  it follows that
  $\left(Y^TY\right)\left(YY^T\right)w = \alpha\beta w = 0$.
  Hence, $\alpha \beta = 0$. If $\beta = 0$, then
  \begin{equation}
  YY^Tw = 0 \quad\Longrightarrow\quad \scalprod{w}{YY^Tw} = 0
  \quad\Longrightarrow\quad \scalprod{Y^Tw}{Y^Tw} = \norm{Y^Tw}^2 = 0\;.
  \end{equation}
  Since $Y^T w = 0 \in V$, the subspace $V$ is invariant under both
  $Y$ and $Y^T$. The second case $\alpha = 0$ is not possible. To see
  this, we similarly compute
  \begin{equation}
    Y^TYw = 0 \quad\Longrightarrow\quad Yw = 0
  \end{equation}
  which shows that $w$ is an eigenvector of $Y$. This contradicts
  our assumptions for Case 3 (but note that this situation is handled
  in Case $1$ or $2$).\\[1em]
  This completes the construction of the invariant subspace $V$
  and the proof of the lemma.
\end{proof}

\begin{rem}
  The case $\lambda > 0$ of \leref{lem:real_square_roots} can be
  deduced from the theory of principal angles
  (see, e.g,~\cite{Galantai:2013:PPM}) for the eigenspaces of
  $Y$ with eigenvalues $\sqrt{\lambda}$ and $-\sqrt{\lambda}$.
  We are not aware of a similar connection in the case $\lambda \leq 0$.
\end{rem}

\begin{rem}
The condition on $Y$ is also sufficient, i.e., any matrix with the
described block structure is a real matrix square root of a symmetric
matrix. It is also possible to show that solutions to
$Y^2 = \lambda \id$ exist if and only if $\lambda \geq 0$, or $n$ is
even.
\end{rem}

We are now ready to formulate the main result of this section.
\begin{theo}
  \label{theo:real_square_roots}
  For any real square matrix $X \in \R^{n \times n}$ with symmetric
  square $X^2 \in \Sym(n)$ there exists an orthogonal change
  of coordinates $T \in \O(n)$ such that the transformed
    matrix
  \begin{align}
    \widetilde{X} \eqdef T^{-1}XT
    = \diag(\widetilde{X}_1,\widetilde{X}_2,\ldots,\widetilde{X}_r)
    = \begin{pmatrix}
    \widetilde{X}_1  & 0               & \ldots & 0\\
    0                & \widetilde{X}_2 & \ddots & \vdots\\
    \vdots           & \ddots          & \ddots & 0\\
    0                & \ldots          & 0      & \widetilde{X}_r
  \end{pmatrix}
  \end{align}
  is block-diagonal with square blocks $\widetilde{X}_j, j = 1,\ldots,r$,
  that are either of size one or two. Each block $\widetilde{X}_j$ is a
  real square root of a multiple of an identity matrix $\id$, i.e.,
  $$\widetilde{X}_j^2 = \mu_j \id,\quad \mu_j \in \R\;.$$
\end{theo}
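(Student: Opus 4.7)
The plan is to assemble the theorem directly from the preceding preparatory results: Corollary \ref{cor:square_root_eigenblocks} provides a first reduction that decomposes $X$ along the eigenspaces of its symmetric square $S = X^2$, and Lemma \ref{lem:real_square_roots} then further refines each resulting diagonal block. The two orthogonal changes of basis can be composed to give the desired $T \in \O(n)$.

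More concretely, I would proceed in three steps. First, since $S = X^2 \in \Sym(n)$, the spectral theorem yields an orthogonal $T_1 \in \O(n)$ that diagonalizes $S$ with eigenvalues grouped by eigenspace, i.e.\
\begin{equation*}
T_1^{-1} S T_1 = \diag\bigl(\lambda_1 \id_{m_1}, \lambda_2 \id_{m_2}, \ldots, \lambda_m \id_{m_m}\bigr),
\end{equation*}
where $m_i = \dim E_{\lambda_i}$. By Lemma \ref{lem:no_mix} the matrix $X$ preserves each eigenspace $E_{\lambda_i}$, so Corollary \ref{cor:square_root_eigenblocks} applies and yields
\begin{equation*}
T_1^{-1} X T_1 = \diag\bigl(\widetilde{X}^{(1)}, \widetilde{X}^{(2)}, \ldots, \widetilde{X}^{(m)}\bigr),
\end{equation*}
where each $\widetilde{X}^{(i)} \in \R^{m_i \times m_i}$ satisfies $\bigl(\widetilde{X}^{(i)}\bigr)^2 = \lambda_i \id_{m_i}$.

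Second, I would apply the Block Lemma (Lemma \ref{lem:real_square_roots}) to each block $\widetilde{X}^{(i)}$ individually. This yields orthogonal matrices $T^{(i)} \in \O(m_i)$ such that $\bigl(T^{(i)}\bigr)^{-1} \widetilde{X}^{(i)} T^{(i)}$ is block-diagonal with blocks of size one or two, each squaring to $\lambda_i \id$. Assembling these into the block-diagonal matrix $T_2 \eqdef \diag\bigl(T^{(1)}, \ldots, T^{(m)}\bigr) \in \O(n)$ yields an orthogonal matrix, since the direct sum of orthogonal transformations on orthogonal subspaces is again orthogonal.

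Finally, I would set $T \eqdef T_1 T_2 \in \O(n)$; this is orthogonal as a product of orthogonal matrices. Then
\begin{equation*}
T^{-1} X T = T_2^{-1} \bigl(T_1^{-1} X T_1\bigr) T_2
\end{equation*}
is block-diagonal with blocks of size one or two, and each such block $\widetilde{X}_j$ coming from the $i$-th eigenspace satisfies $\widetilde{X}_j^2 = \lambda_i \id = \mu_j \id$ with $\mu_j \eqdef \lambda_i \in \R$. There is no substantive obstacle here: all analytic work is concentrated in the Block Lemma; the theorem is essentially a bookkeeping statement that combines the global spectral decomposition of $S$ with the local block structure on each eigenspace, the key point being that both changes of basis are orthogonal so their composition preserves the Frobenius norm needed for Problem \ref{prob:opt}.
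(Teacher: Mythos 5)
Your proposal is correct and follows essentially the same route as the paper, which simply states that one applies the Block Lemma (Lemma \ref{lem:real_square_roots}) to each eigenblock of $S = X^2$ obtained via Corollary \ref{cor:square_root_eigenblocks}. You have merely spelled out the composition of the two orthogonal changes of basis, which the paper leaves implicit.
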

\begin{proof}
  It suffices to apply \leref{lem:real_square_roots} to each eigenblock
  of $S = X^2$.
\end{proof}

\begin{rem}
  Each eigenspace $E_{\lambda_i}$ of $S$ in \leref{lem:no_mix} is
  possibly decomposed into
  multiple subspaces by \leref{lem:real_square_roots}. As a
  result, the eigenvalues $\mu_j$, $1 \leq j \leq r$,
  of $\widetilde{X}_j^2$
  in \theref{theo:real_square_roots} are equal to the eigenvalues
  $\lambda_i$, $1 \leq i \leq m$, in the notation of
  \leref{lem:no_mix} with, possibly,
  different indices. Several $\mu_j$ in the statement of
  \theref{theo:real_square_roots}
  may be equal to the same $\lambda_i$ in the sense of
  \leref{lem:no_mix}. For example, we might have the following
  $$
  (\mu_1,\mu_2,\mu_3,\mu_4,\mu_5)
  = (\lambda_1,\lambda_1,\lambda_2,\lambda_3,\lambda_3)\;.
  $$
\end{rem}

\begin{rem}
  \label{rem:real_square_roots}
  An equivalent reformulation of the theorem is the following.
  For a matrix $X$ whose square is symmetric, there exists a decomposition
  of $\R^n$ into an orthogonal direct sum of $X$-invariant subspaces
  $V_i$ of dimension one or two such that $X^2$ is a multiple of
  the identity matrix on each $V_i$. The list of columns of the change
  of basis matrix $T \in \O(n)$ in \theref{theo:real_square_roots}
  is obtained by concatenation of orthonormal bases of $V_i$. Note
  that each $V_i$ is also invariant under $X^T$.
\end{rem}

\begin{rem}
  Given $X$ and $S$ the decomposition into invariant subspaces is not
  unique. In particular, a subspace of dimension two can sometimes
  be further decomposed into two one-dimensional subspaces.
\end{rem}

\begin{rem}
  Our description of real matrices which square to a real symmetric
  matrix resembles the well-known characterization
  of the group of real orthogonal matrices $\O(n)$. Every
  orthogonal matrix is orthogonally conjugated to a block
  diagonal matrix with blocks of size one and two,
  see, e.g.,~\cite[Thm. 12.5,p. 354]{Gallier:2011:GMA}.
\end{rem}

The following example illustrates the block-diagonal form
stated in~\theref{theo:real_square_roots}. At the same
time it illuminates the relation of the construction to
the real Schur form of a real matrix square root
described in~\cite{Higham:1987:CRSQ}.
\begin{exa}
  \label{exa:block_and_schur}
  We use the notation of the Block~\leref{lem:real_square_roots}
  and consider the matrix
  \begin{equation}
Y=
\begin{pmatrix*}[r]
  1 & 0 & 1 & 1\\
  0 & 1 & 1 & 1\\
  0 & 0 & -1& 0\\
  0 & 0 & 0 & -1
\end{pmatrix*} \in \R^{4 \times 4}\;.
\end{equation}
Note that $Y^2 = \id$. Clearly, $Y$ is not block-diagonal, but it is
in real Schur form~\cite[Thm. 6]{Higham:1987:CRSQ}.
In this representation, the value of $\hsnorm{Y}^2$ cannot be decomposed
into diagonal contributions. As we noted before, the orthogonal
transformation $T \in \O(n)$ described by \leref{lem:real_square_roots}
is not unique. Let us choose
\begin{equation}
T=
\frac{1}{\sqrt 2}
\begin{pmatrix*}[r]
-1 & 0 & 0 & 1\\
 1 & 0 & 0 & 1\\
 0 &-1 & 1 & 0\\
 0 & 1 & 1 & 0
\end{pmatrix*}
\quad\text{which yields}\quad
\widetilde{Y} \eqdef T^{-1}YT =
\begin{pmatrix*}[r]
1 & 0 & 0 & 0\\
0 & -1 & 0 & 0\\
0 & 0 & 1& 0\\
0 & 0 & 2 & -1
\end{pmatrix*}\;.
\end{equation}
Hence, in the orthonormal basis defined by $T$, we obtain a block-diagonal
representation $\widetilde{Y}$ with blocks of size at most two and
satisying $\widetilde{Y}^2 = \id$. The Frobenius norm of $Y$ is now
composed of diagonal contributions
\begin{equation}
  8 = \hsnorm{Y}^2
    = \hsnorm{T^{-1}YT}^2
    = \hsnorm{\widetilde{Y}}^2
    = \hsnorm{\begin{pmatrix*}[r] 1 & 0\\ 0 & -1\end{pmatrix*}}^2 +
      \hsnorm{\begin{pmatrix*}[r] 1 & 2\\ 0 & -1\end{pmatrix*}}^2
  = 2 + 6\;.
\end{equation}
In the present example, the transformed matrix $\widetilde{Y}$ is
not in real Schur form, since the eigenvalues of the lower right
$2\times 2$-block are not complex conjugates. Note that, if we
flip the third and fourth columns of $T$, then $\widetilde{Y}$
is, again, in real Schur form.
\end{exa}

\begin{rem}
  \label{rem:real_schur_form}
  The above example shows that the block-diagonal form guaranteed
  by~\theref{theo:real_square_roots} is not in general a real Schur
  form. It is, however, always possible to find an orthogonal change
  of basis that yields the block-diagonal representation of
  \theref{theo:real_square_roots} in real Schur form, a fact
  which we will not use in the paper. Indeed, all blocks of
  size one and all blocks of size two with $\lambda < 0$ are already
  in real Schur form. For a size two block with $\lambda \geq 0$,
  we can pick an orthonormal basis that begins with an
  eigenvector.
\end{rem}
 \countres
\makeatletter{}\section{Critical points of the Cosserat shear-stretch energy}
\label{sec:critical_points}
In this section we investigate the critical points $R \in \SO(n)$
of the objective function
\begin{equation}
  W(R\,;D)\;=\;\hsnorm{\sym(RD - \id)}^2
\end{equation}
in~\probref{prob:opt}. Since the objective function $W(R\,;D)$ is
polynomial, we can proceed by taking derivatives along curves in
the matrix group $\SO(n)$. Regarding the diagonal parameter
$D = \diag(d_1,\ldots,d_n)$,
we make a mild \assref{ass:D}, in particular $d_i \neq 0$,
and give a complete description of the critical points
in that case. For some of our conclusions, we have to
make the more restrictive \assref{ass:Dpos} enforcing,
in particular, positive $d_i > 0$.

In \leref{lem:symmetric_square_condition} we show that the matrix
$X(R) \eqdef RD - \id$ satisfies a symmetric square condition
precisely when $R \in \SO(n)$ is a critical
point of the objective function $W(R\,;D)$. Hence $X(R)$ is
\emph{orthogonally} similar to a block-diagonal matrix
by \theref{theo:real_square_roots}
at the critical points. This is exploited in the \emph{key result}
\leref{lem:sim_inv} of this section, which shows that the
block-structure of $X(R) = RD - \id$ at the critical points is
simultaneously inherited by \emph{both} $R$ and $D$. This observation
then allows to break \probref{prob:opt} down into decoupled
subproblems of dimension one and two.

Since $\SO(n) \subset \R^{n \times n}$ is a Lie matrix group, we can
identify the Lie algebra $\so(n)$ with the tangent space
$T_\id\SO(n) \subset \R^{n \times n}$ of $\SO(n)$ at the identity.
It is well-known that the tangent space $\so(n)$ is given by the
skew-symmetric matrices $\Skew(n)$, see, e.g.,~\cite{Baker:2012:MG}.
Furthermore, the Frobenius inner product $\scalprod{X}{Y} \eqdef \tr{X^TY}$
gives rise to an orthogonal decomposition of the vector space
of \emph{all} real square matrices
$$
\R^{n \times n} = \Sym(n) \oplus_{\perp} \Skew(n)\;.
$$
Our analysis of the critial points is based on the following algebraic
stationarity condition obtained from the Euler-Lagrange equations.

\begin{lem}[Symmetric square condition]
  \label{lem:symmetric_square_condition}
  Let $D \eqdef \diag(d_1, \ldots, d_n) \in \R^{n \times n}$ be a diagonal
  matrix and let $X(R) \eqdef RD - \id$. Then a rotation $R \in \SO(n)$
  is a critical point of the objective function
  $$
  W(R\,;D) \;\eqdef\; \hsnorm{\sym(RD - \id)}^2
  $$
  if and only if $X(R)$ satisfies the symmetric square condition
  $$
  X(R)^2 \eqdef (RD - \id)^2 \in \Sym(n)\;.
  $$
\end{lem}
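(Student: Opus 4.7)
The plan is to compute the first variation of $W$ along curves in the Lie matrix group $\SO(n)$, cast the resulting Euler--Lagrange equation as an algebraic pairing condition against skew-symmetric matrices, and then translate that condition into the claimed symmetric-square form. Every smooth curve through a given $R \in \SO(n)$ can be parametrized as $R(t) = \exp(tA)\,R$ with $A \in \so(n) = \Skew(n)$, and as $A$ ranges over $\Skew(n)$ the velocities $\dot R(0) = AR$ sweep out all of $T_R\SO(n)$; so $R$ is critical iff the directional derivative vanishes for every $A \in \Skew(n)$.

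Writing $M \eqdef \sym(RD - \id)$ and using that $\sym$ is the orthogonal projection onto $\Sym(n)$ with respect to the Frobenius inner product, differentiation yields
\begin{equation*}
  \tfrac{1}{2}\,\tfrac{d}{dt}\big|_{t=0}\,W(R(t);D)
  \;=\; \scalprod{M}{\sym(ARD)}
  \;=\; \scalprod{M}{ARD}
  \;=\; \tr{M\,A\,R\,D}
  \;=\; \tr{(RDM)\,A}.
\end{equation*}
For any $B \in \R^{n\times n}$ and any $A \in \Skew(n)$ one has the identity $\tr{B\,A} = -\scalprod{B}{A}$, which follows from $\scalprod{\sym B}{A} = 0$ by orthogonality of $\Sym(n)$ and $\Skew(n)$. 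Combined with the decomposition $\R^{n\times n} = \Sym(n)\oplus_\perp\Skew(n)$, this shows that the directional derivative vanishes for every $A \in \Skew(n)$ precisely when $RDM \in \Sym(n)$.

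It remains to reduce $RDM \in \Sym(n)$ to the claimed condition $X(R)^2 \in \Sym(n)$. A direct expansion, using $X^T = DR^T - \id$, gives
\begin{equation*}
  RDM \;=\; \tfrac{1}{2}\,RD\bigl((RD - \id) + (DR^T - \id)\bigr)
  \;=\; \tfrac{1}{2}\bigl(RDRD + RD^2R^T - 2\,RD\bigr).
\end{equation*}
Since $D$ is diagonal, $RD^2R^T$ is automatically symmetric, so $RDM \in \Sym(n)$ is equivalent to $RDRD - 2\,RD \in \Sym(n)$; on the other hand, $(RD - \id)^2 = RDRD - 2\,RD + \id$ differs from $RDRD - 2\,RD$ only by the symmetric matrix $\id$, hence $X(R)^2 \in \Sym(n)$ is equivalent to the same reduced condition. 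Chaining these equivalences proves the lemma. No single step is conceptually difficult; the one place where care is required is tracking the sign when a skew matrix is paired via the trace against a matrix without any a priori symmetry.
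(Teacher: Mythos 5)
Your proposal is correct and follows essentially the same route as the paper: differentiate along $\exp(tA)R$, drop the $\sym$ projection by orthogonality of $\Sym$ and $\Skew$, identify the stationarity condition as a symmetry condition on a product (your $RDM \in \Sym(n)$ is the transpose of the paper's $MDR^T \in \Sym(n)$, so they are the same condition), and then peel off the automatically symmetric pieces $RD^2R^T$ and $\id$ to land on $(RD-\id)^2\in\Sym(n)$. The only deviations are cosmetic reorganizations of the trace and transpose bookkeeping.
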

\begin{proof}
In order to compute critical points in the submanifold
$\SO(n) \subset \R^{n \times n}$, we have to locate zeroes of
the tangent mapping
${\rm d}W: T\SO(n) \to T\R^{n\times n} \eqiso \R^{n \times n}$.
To this end, we compute the derivatives of the energy
$W(R\,;D)$ along a family of smooth curves
\begin{equation}
  c_A: (-\epsilon, \epsilon) \to \SO(n),\quad c_A(t) \eqdef \exp(tA)R \in \SO(n),\quad A \in \so(n),
\end{equation}
in the manifold of rotations. The right-trivialization
of the tangent space at $R \in \SO(n)$ allows to identify
$T_R \SO(n) = \so(n) \cdot R = \Skew(n) \cdot R$
and so we can always express a tangent vector $\xi \in T_R \SO(n)$
in the form $\xi = AR \in \Skew(n) \cdot R$. This family of
curves satisfies
\begin{equation}
  \forall\, \xi = AR \in T_R \SO(n):\quad
  \left.\frac{\rm d}{\rm dt}\right|_{t = 0}\; c_A(t) = AR = \xi\;.
\end{equation}
Thus, for every possible tangent direction $\xi = AR \in T_R\SO(n)$,
there is precisely one curve of the family which emanates from
$R \in \SO(n)$ into this direction $\xi$.

A rotation $R$ is a critical point of the energy $W(R\,;D)$ if and
only if
$$
\forall A \in \so(n):\quad \frac{\rm d}{\rm dt}\left. (W \circ c_A)(t) \right\vert_{t=0} \;=\; 0\;.
$$

It is well-known that the matrix exponential is given by $(\id + tA)$
to first order in $t$ and we write $\exp(tA) \sim (\id + tA)$. Thus,
by the chain rule, we also have
$$
(W\circ c_A)(t) \sim (W\circ (\id + tA)R)(t)\;.
$$
We expand the expression
\begin{align*}
  W \circ (\id + tA)R
  &= \hsnorm{\sym((1+tA)RD - \id)}^2
  = \hsnorm{\sym(RD - \id) + t\sym(ARD)}^2\\
  &= \hsnorm{\sym(RD - \id)}^2
  + 2t\, \scalprod{\sym(RD - \id)}{\sym(ARD)}
  + t^2\, \hsnorm{\sym(ARD)}^2
\end{align*}
and obtain the expression for the first derivative ${\rm d}W$ from
the term linear in $t$. In other words
\begin{equation}
  \frac{\rm d}{\rm dt}\left. (W \circ c_A)(t) \right\vert_{t=0} = 2\,\scalprod{\sym(RD - \id)}{\sym(ARD)}\;.
\end{equation}
Hence, a point $R$ is a critical point for the energy $W$ if and only
if it satisfies
$$
\forall A \in \so(n):\quad \sym(RD - \id) \perp \sym(ARD)\;.
$$
Since $\Sym(n) \perp \Skew(n)$, we may add $\skew(ARD)$ on
the right hand side which gives us the equivalent condition
$$
\forall A \in \so(n):\quad \sym(RD - \id) \perp ARD\;.
$$
Expanding the definition of the Frobenius inner product,
we find
\begin{align}
  0 = \scalprod{\sym(RD - \id)}{ARD}
  &= \tr{\sym(RD - \id)^T ARD}
  = \tr{RD\sym(RD - \id) A}\notag\\
  &= \scalprod{\sym(RD - \id)(RD)^T}{A}\;.
\end{align}
Since this condition must hold for all $A \in \Skew(n)$,
it follows that
$$
\sym(RD - \id)DR^T \in \Sym(n)\;.
$$
We now multiply by a factor of $2$ and expand the definition of
$\sym(X) \eqdef \frac{1}{2}(X + X^T)$ which leads us to
\begin{align}
2 \sym(RD - \id)DR^T
&= (RD + DR^T - 2\id)DR^T
 =  RD^2R^T + (DR^T)^2 - 2DR^T\notag\\
&= (DR^T - \id)^2 + (RD^2R^T - \id).
\end{align}
The second term on the right hand side is always symmetric
and the effective condition for a critical point is thus
\begin{equation}
  (DR^T - \id)^2 \in \Sym(n)\;.
\end{equation}
Finally, observing that symmetry is invariant under
transposition, we conclude that
\begin{equation}
  \label{eq:sym_sq_cond_deriv}
  \left((DR^T - \id)^2\right)^T = (RD - \id)^2 \in \Sym(n)
\end{equation}
is a sufficient and necessary condition for a critical point
$R \in \SO(n)$ of $W(R\,;D)$.
\end{proof}

\begin{rem}
  We immediately observe that $R = \id$ solves the condition
  \eqref{eq:sym_sq_cond_deriv} and is always a critical point
  of the energy $W(R\,;D) \eqdef \hsnorm{\sym(RD - \id)}^2$.
  However, in general, it will not be the global minimizer.
\end{rem}

Our next step is to apply~\theref{theo:real_square_roots}
and~\remref{rem:real_square_roots} to the special
case $X(R) = RD - \id$. As we shall see, this implies quite restrictive
conditions on $R \in \SO(n)$.

Let us make the following assumption on the diagonal matrix $D$.
\begin{ass}
  \label{ass:D}
  The entries of the diagonal matrix $D = \diag(d_1,\ldots,d_n)$, which
  parametrizes the energy $W(R\,;D)$, do not vanish and do not cancel
  each other additively, i.e.,
  $$
  d_i \neq 0 \quad\text{and}\quad d_i + d_j \neq 0,\quad 1\leq i,j \leq n\;.
  $$
\end{ass}
This ensures that $\ker(D) = {\bf 0}$ and that any $D^2$-invariant
subspace is also $D$-invariant. Note that if the entries of
$D = \diag(d_1,\ldots,d_n)$ are positive, this assumption is
satisfied. For the original problem in Cosserat theory which stimulated
the present
work~\cite{Fischle:2015:OC2D,Fischle:2015:OC3D,Fischle:2016:RPNI},
the entries of $D$ are the singular values
$\nu_i > 0$, $i = 1,\ldots,n$,
of the deformation gradient $F \in \GL^+(n)$.

The following insight is a key to our discussion.
\begin{lem}[Simultaneous invariance of $R$ and $D$]
  \label{lem:sim_inv}
  Suppose that the eigenvalues of $D$ satisfy the above assumption.
  Let $V$ be a subspace invariant under $X(R) = RD - \id$, such that
  $V^\perp$ is also invariant under $X(R)$. Then both $V$ and
  $V^\perp$ are invariant under $D$ and $R$.
\end{lem}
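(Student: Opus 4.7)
The plan is to leverage the dual invariance --- $V$ invariant under $X(R)$ and $V^\perp$ invariant under $X(R)$ --- to obtain that $V$ is invariant under both $X(R) = RD - \id$ and its transpose $X(R)^T = DR^T - \id$, then multiply these invariances together to extract $D^2$, and finally use \assref{ass:D} to upgrade to $D$ and $R$ separately.

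First I would apply the standard adjoint characterization \eqref{eq:inv_vperp} recalled just before the lemma: the hypothesis $X(R) V^\perp \subseteq V^\perp$ is equivalent to $X(R)^T V \subseteq V$. Combined with $X(R) V \subseteq V$, and adding the identity which trivially preserves $V$, this yields
\begin{equation}
(RD)\, V \subseteq V \quad \text{and} \quad (DR^T)\, V \subseteq V.
\end{equation}
The key algebraic observation is then that, because $R \in \SO(n)$,
\begin{equation}
(DR^T)(RD) \;=\; D\,(R^T R)\,D \;=\; D^2,
\end{equation}
so $V$ is automatically invariant under $D^2$.

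Next I would deduce that $V$ is $D$-invariant from \assref{ass:D}. Since $D$ is diagonal, the eigenspaces of $D^2$ are spanned by coordinate vectors $e_i, e_j$ with $d_i^2 = d_j^2$, i.e., with $d_i = \pm d_j$; but $d_i = -d_j$ is excluded by $d_i + d_j \neq 0$, so $D$ and $D^2$ have identical eigenspace decompositions. Hence every $D^2$-invariant subspace is $D$-invariant, giving $D V \subseteq V$. Because $d_i \neq 0$ for all $i$, $D$ is invertible, so $D|_V$ is invertible and $D^{-1} V = V$; therefore
\begin{equation}
R V \;=\; (RD)(D^{-1} V) \;\subseteq\; (RD) V \;\subseteq\; V.
\end{equation}
Finally, since $D = D^T$ and $R^{-1} = R^T$, $D$- and $R$-invariance of $V$ transfer automatically to $V^\perp$, so both subspaces are simultaneously invariant under $R$ and $D$.

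There is no real obstacle here once one spots the identity $(DR^T)(RD) = D^2$; the only content of the proof is that \assref{ass:D} is precisely what is needed to pass from $D^2$-invariance back to $D$-invariance (this is exactly why the condition $d_i + d_j \neq 0$ is imposed on $D$). Everything else is bookkeeping with \eqref{eq:inv_vperp} and invertibility of $D$.
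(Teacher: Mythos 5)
Your proof is correct and takes essentially the same route as the paper's: pass from invariance of $V$ and $V^\perp$ under $X(R)$ to invariance of $V$ under both $RD$ and $DR^T$, compose to get $D^2 V \subseteq V$, invoke \assref{ass:D} to upgrade to $D V \subseteq V$, and then use invertibility of $D$ to peel off $R$. The only difference is that you spell out more explicitly why the condition $d_i + d_j \neq 0$ makes the eigenspaces of $D^2$ and $D$ coincide, which the paper states more tersely.
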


\begin{proof}
Recall first that $V$ and $V^\perp$ are both invariant under $X(R)$
if and only if $V$ is invariant under both $X(R)$ and
$X(R)^T$; cf. \eqref{eq:inv_vperp}.

\smallskip
By assumption the subspace $V$ is invariant under both
$RD = X + \id$ and $(RD)^T = DR^T = X^T + \id$. Therefore
$$
D^2V = (DR^T)(RD) V \subseteq (DR^T)V \subseteq V\;.
$$
From the assumption on $D$, we have $DV \subseteq V$. Since $D$
has only nonzero eigenvalues $D$ is invertible and so $DV = V$.
It follows that
$$
RD V \subseteq V \quad\implies\quad RV \subseteq V\;.
$$
Since $R \in \SO(n)$ is invertible, we have $RV = V$. Reversing the roles
of $V$ and $V^\perp$, we can apply the same argument to $V^\perp$.
\end{proof}

By~\theref{theo:real_square_roots}, as phrased
in~\remref{rem:real_square_roots}, there exists a sequence
of pairwise orthogonal vector spaces $V_i$, $i = 1,\ldots,r$,
with $1 \leq \dim V_i \leq 2$ which decompose
$\R^n = V_1 \oplus_\perp V_2 \oplus_\perp \ldots \oplus_\perp V_r$.
These correspond to a block-diagonal representation of
$X(R) \eqdef RD - \id$. The existence of an associated orthogonal
change of basis matrix $T \in O(n)$ is also assured
by~\theref{theo:real_square_roots}.
Furthermore, by \leref{lem:sim_inv}, both $R$ and $D$ are also
block-diagonal with respect to this choice of basis.
This means, in particular, that \emph{any solution} $R$ satisfying
the symmetric square condition
$(X(R))^2 = (RD - \id)^2 \in \Sym(n)$
admits a block-diagonal representation. Since this condition
characterizes the critical points
by \leref{lem:symmetric_square_condition},
any critical point of $W(R\,;D)$ admits a representation
\begin{equation}
  \label{eq:critical_point_block_structure}
  \widetilde{R}
  = T^{-1}RT
  = \diag(\widetilde{R}_1,\ldots,\widetilde{R}_r)
  = \begin{pmatrix}
    \widetilde{R}_1  & 0               & \ldots & 0\\
    0                & \widetilde{R}_2 & \ddots & \vdots\\
    \vdots           & \ddots          & \ddots & 0\\
    0                & \ldots          & 0      & \widetilde{R}_r
  \end{pmatrix} \in O(n) \subset \R^{n \times n}
\end{equation}
in block-diagonal form. Here, the blocks on the diagonal satisfy
$\widetilde{R}_i \in O(n_i)$, $i = 1,\ldots,r$, with
$n_i \in \{1,2\}$ and $\sum_i^r n_i = n$.

In the basis provided by $T \in \O(n)$, any critical point
$R \in \O(n)$ can be constructed from solutions
$\widetilde{R}_i \in \O(n_i)$ of one- and two-dimensional
subproblems
\begin{equation}
  \left(\widetilde{X}(\widetilde{R}_i)\right)^2 \in \Sym(n_i)\;.
\end{equation}
Note that these subproblems are now posed on the space of
\emph{orthogonal}, rather than \emph{special orthogonal}
matrices.

\begin{ass}
  \label{ass:Dpos}
  For the purpose of clarity of exposition, we make an additional,
  stronger assumption on the diagonal matrix
  $D = \diag(d_1,\ldots,d_n)$, namely
  $$
  d_1 > d_2 > \ldots > d_n > 0\;.
  $$
\end{ass}
The slightly more general case of possibly non-distinct positive
entries $d_i$ can be treated similarly which we will indicate in
running commentary.

\begin{rem}[Implications of $D$-invariance]
  Under the \assref{ass:Dpos}, the $D$-invariance of the
  subspaces $V_i$ shown in \leref{lem:sim_inv} implies a strong
  restriction: the $V_i$ are necessarily coordinate subspaces in
  the standard basis of $\R^n$. Thus, we can index these data by
  partitions of the index set $\{1,\ldots,n\}$ into disjoint subsets
  of size one or two. Furthermore, by picking a standard coordinate
  basis for each $V_i$, we can ensure that the change of basis matrix
  $T \in \O(n)$ is a permutation matrix.
\end{rem}

We summarize that this particular structure allows to reduce
the optimization~\probref{prob:opt} to a finite list of
decoupled one- and two-dimensional subproblems.
However, we have to consider minimization
with respect to \emph{orthogonal} matrices $R \in \O(n)$
instead of $R \in \SO(n)$. This will be the content of
the next section.
 \countres
\makeatletter{}\section{Analysis of the decoupled subproblems}
\label{sec:subproblems}
Let $I \subseteq \{1,\ldots,n\}$ be a one-element subset $\{i\}$
or a two-element subset $\{i,j\}$ and let $D_I$ be the associated
restriction of $D$ given by
$$
\begin{cases}
D_I \eqdef \begin{pmatrix} d_i \end{pmatrix}, &\text{if}\quad I = \{i\},\\[.5em]
D_I \eqdef \begin{pmatrix} d_i & 0 \\ 0 & d_j\end{pmatrix}, &\text{if}\quad I = \{i,j\}\;.
\end{cases}
$$

In this section we solve for critical points of the function
$$
W(R_I\,;D_I) \eqdef \hsnorm{\sym(R_ID_I -\id)}^2
$$
for $R_I \in O(\abs{I})$ and compute the corresponding
critical values. This corresponds to the solution of the decoupled
lower-dimensional subproblems as described in the previous section.

\begin{theo}[Critical points: size one]
  \label{theo:size_one}
  For $I = \{i\}$ we have the submatrix $D_I = (d_i)$ and
  $R_I = \pm \id = (\pm 1)$. The realized critical energy
  levels are
  \begin{equation}
    W(+\id\,;D_I) = (d_i - 1)^2
    \quad\text{and}\quad
    W(-\id\,;D_I) = (d_i + 1)^2\;.
  \end{equation}
\end{theo}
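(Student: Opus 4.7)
The proof plan is essentially immediate by enumeration, since the ambient space is zero-dimensional. First I would note that $O(1) = \{+1, -1\}$ is a finite set of two points, hence every element is automatically a critical point in the trivial sense — there are no nontrivial tangent directions to test, and the symmetric square condition from \leref{lem:symmetric_square_condition} is vacuous in dimension one (every $1\times 1$ matrix is symmetric). So the two candidates $R_I = +\id$ and $R_I = -\id$ exhaust the set of critical points on $O(1)$.

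Next I would simply evaluate the energy on these two points. Since $1 \times 1$ matrices are already symmetric, $\sym$ acts as the identity, and the Frobenius norm reduces to the absolute value. For $R_I = +\id$ we get
\begin{equation*}
W(+\id\,;D_I) \;=\; \hsnorm{\sym((+1)\cdot d_i - 1)}^2 \;=\; (d_i - 1)^2,
\end{equation*}
and for $R_I = -\id$ we get
\begin{equation*}
W(-\id\,;D_I) \;=\; \hsnorm{\sym((-1)\cdot d_i - 1)}^2 \;=\; (-d_i - 1)^2 \;=\; (d_i + 1)^2.
\end{equation*}

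There is no hard step here; the entire content is the observation that $O(1)$ consists of exactly two isolated points, so that the classification is complete after a direct computation of the two energy values. The theorem primarily serves to fix notation and the critical value in the size-one blocks, to be combined with the size-two analysis in the sequel.
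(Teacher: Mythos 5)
Your proof is correct and matches the paper's: the paper simply notes that $\O(1) = \{+1,-1\}$ and computes the two values directly, exactly as you do. Your additional remark that the symmetric square condition is vacuous in dimension one is a fine supplementary observation but not needed.
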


\begin{proof}
  There are only two orthogonal matrices in dimension one and
  the result is immediate.
\end{proof}

For the case $\abs{I} = 2$, we consider the two separate cases
$\det{R_I} = 1$ and $\det{R_I} = -1$.

\begin{theo}[Critical points: size two and positive determinant]
  \label{theo:size_two_pos}
  The critical points $R_I$ with $\det{R_I} = 1$ are described
  as follows. For any values $d_i$ and $d_j$
  the matrices $R_I = \pm \id$ are critical points with the
  critical values $(d_i - 1)^2 + (d_j - 1)^2$
  and $(d_i + 1)^2 + (d_j + 1)^2$, respectively.
  In addition, if $d_i + d_j > 2$, there are two non-diagonal
  critical points
  \begin{equation}
    R_I = \begin{pmatrix}
      \cos \alpha  & -\sin \alpha \\
      \sin \alpha  &  \cos \alpha
    \end{pmatrix},\quad\text{with}\quad \cos \alpha = \frac{2}{d_i + d_j}
  \end{equation}
  which attain the same critical value
  \begin{equation}
    W(R_I\,;D_I) = \frac{1}{2} (d_i - d_j)^2\;.
  \end{equation}
\end{theo}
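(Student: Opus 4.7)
My plan is to combine the symmetric square condition from Lemma~\ref{lem:symmetric_square_condition} with the $2\times 2$ criterion of Lemma~\ref{lem:root_2x2} and then parametrize $\SO(2)$ by a rotation angle. The critical point condition $(R_I D_I - \id)^2 \in \Sym(2)$, together with Lemma~\ref{lem:root_2x2}, shows that every critical point satisfies either $R_I D_I - \id \in \Sym(2)$ or $\tr(R_I D_I - \id) = 0$. These two algebraic conditions reduce the search to a pair of elementary trigonometric equations.

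Concretely, I would write
\begin{equation*}
R_I = \begin{pmatrix} \cos\alpha & -\sin\alpha\\ \sin\alpha & \cos\alpha\end{pmatrix},
\qquad
R_I D_I - \id = \begin{pmatrix} d_i\cos\alpha - 1 & -d_j\sin\alpha\\ d_i\sin\alpha & d_j\cos\alpha - 1\end{pmatrix}.
\end{equation*}
The symmetry condition $R_I D_I - \id \in \Sym(2)$ becomes $(d_i + d_j)\sin\alpha = 0$, and since $d_i + d_j > 0$ this forces $\sin\alpha = 0$, i.e.\ $R_I = \pm\id$, contributing the two diagonal critical points with energies $(d_i-1)^2 + (d_j-1)^2$ and $(d_i+1)^2 + (d_j+1)^2$ by direct substitution. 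The vanishing trace condition becomes $(d_i + d_j)\cos\alpha = 2$, which has a solution $\cos\alpha = 2/(d_i + d_j)$ with $|\cos\alpha| \le 1$ precisely when $d_i + d_j \ge 2$; strict inequality yields two distinct angles $\pm\alpha$ and hence two non-diagonal critical points, while equality collapses back to $R_I = \id$.

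For the critical value in the non-diagonal case, I would compute $W(R_I\,;D_I)$ directly. Expanding and using $\sin^2\alpha + \cos^2\alpha = 1$, one gets the clean form
\begin{equation*}
W(R_I\,;D_I) = \tfrac{(d_i + d_j)^2}{2}\cos^2\alpha - 2(d_i + d_j)\cos\alpha + 2 + \tfrac{(d_i - d_j)^2}{2}.
\end{equation*}
Substituting $\cos\alpha = 2/(d_i + d_j)$ causes the first three terms to telescope to $0$, leaving exactly $\tfrac{1}{2}(d_i - d_j)^2$, which matches the claim.

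I do not anticipate a serious obstacle; the main care needed is in the bookkeeping of the degenerate boundary case $d_i + d_j = 2$ (where the Case~B solution coincides with $R_I = \id$ from Case~A, so no new critical point arises) and in verifying that the trigonometric simplification leading to the factored form $\tfrac{(d_i+d_j)^2}{2}\cos^2\alpha$ is correct. One might alternatively differentiate $W$ with respect to $\alpha$ directly, which gives $-(d_i + d_j)\sin\alpha\bigl((d_i+d_j)\cos\alpha - 2\bigr) = 0$ and recovers the same two families of critical points, providing a useful sanity check.
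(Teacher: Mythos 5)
Your proposal is correct and follows essentially the same route as the paper: apply Lemma~\ref{lem:symmetric_square_condition} and Lemma~\ref{lem:root_2x2} to split critical points into the symmetry case ($(d_i+d_j)\sin\alpha=0$, giving $R_I=\pm\id$) and the trace-zero case ($(d_i+d_j)\cos\alpha=2$), then compute the critical value by direct expansion. The algebraic simplification you arrive at, $\tfrac{1}{2}(d_i+d_j)^2\cos^2\alpha-2(d_i+d_j)\cos\alpha+2+\tfrac{1}{2}(d_i-d_j)^2$, is precisely the intermediate form appearing in the paper's proof.
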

\begin{proof}
  By \leref{lem:symmetric_square_condition} $R_I$ is a critical
  point if and only if $(R_ID_I - \id)^2$ is symmetric.
  We may thus apply \leref{lem:root_2x2} which implies
  $R_ID_I - \id \in \Sym(2)$ or $\tr{R_ID_I - \id} = 0$.
  Using the explicit representation
  $$
  R_I = \begin{pmatrix}
    \cos \alpha  & -\sin \alpha \\
    \sin \alpha  &  \cos \alpha
  \end{pmatrix}\;,
  $$
  the symmetry condition $R_ID_I - \id \in \Sym(2)$
  is equivalent to $(d_i + d_j)\sin\alpha = 0$ which has
  two solutions $R_I = \pm \id$. The trace condition
  $\tr{R_ID_I - \id} = 0$ is equivalent to
  $(d_i + d_j)\cos\alpha = 2$ which can be solved for
  $\alpha$ if and only if $d_i + d_j \geq 2$.
  It gives rise to two non-diagonal solutions if and
  only if $d_i + d_j > 2$.\\
  \smallskip
  In the first case $R_I = \pm \id$, the critical values
  are immediately seen to be $(d_i - 1)^2 + (d_j - 1)^2$
  and $(d_i + 1)^2 + (d_j + 1)^2$, respectively.\\
  \smallskip
  In the second case, the critical values are calculated
  as follows. Observing that
  \begin{equation}
    \sym(R_I D_I - \id) =
    \begin{pmatrix}
      d_i \cos \alpha - 1 & \frac{1}{2}(d_j - d_i)\sin \alpha\\
      \frac{1}{2}(d_j - d_i)\sin \alpha &  d_j \cos \alpha - 1
    \end{pmatrix}
  \end{equation}
  we use $(d_i + d_j)\cos\alpha = 2$ to get
  \begin{align}
    \hsnorm{\sym(R_I D_I - \id)}^2
    &= (d_i \cos \alpha - 1)^2
    + (d_j \cos \alpha - 1)^2
    + \frac{1}{2}(d_j - d_i)^2 \sin^2 \alpha \notag\\
    &= (d_i^2 + d_j^2)\cos^2 \alpha - 2(d_i +d_j)\cos \alpha
    + 2 + \frac{1}{2} (d_j - d_i)^2 (1 - \cos^2 \alpha) \notag\\
    &=\frac{1}{2} (d_j - d_i)^2
    + \frac{1}{2}(d_i + d_j)^2 \cos^2\alpha
    - 2(d_i +d_j)\cos \alpha + 2 \notag\\
    &= \frac{1}{2} (d_i - d_j)^2 + 2 - 4 + 2
    =  \frac{1}{2} (d_i - d_j)^2\;.
  \end{align}
  This shows the claim.
\end{proof}

\begin{theo}[Critical points: size two and negative determinant]
  \label{theo:size_two_neg}
  The critical points $R_I$ with $\det{R_I} = -1$ are described
  as follows. For any values $d_i$ and $d_j$
  the diagonal matrices $R_I = \pm \diag(1,-1)$ are critical
  points with the critical values $(d_i - 1)^2 + (d_j + 1)^2$
  and $(d_i + 1)^2 + (d_j - 1)^2$, respectively.
  In addition, for $\abs{d_i - d_j} > 2$, there are two
  non-diagonal critical points
  \begin{equation}
    R_I = \begin{pmatrix}
      \cos \alpha  & \sin \alpha \\
      \sin \alpha  & -\cos \alpha
    \end{pmatrix},\quad\text{with}\quad \cos \alpha = \frac{2}{\abs{d_i - d_j}}\;,
  \end{equation}
  which attain the same critical value
  \begin{equation}
    W(R_I\,;D_I) = \frac{1}{2} (d_i + d_j)^2\;.
  \end{equation}
\end{theo}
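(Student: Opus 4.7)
The plan is to mirror the proof of \theref{theo:size_two_pos}, adapting only the parametrization of $O(2)$ to the branch with $\det R_I = -1$. First, I would write
$$
R_I = \begin{pmatrix} \cos\alpha & \ph\sin\alpha \\ \sin\alpha & -\cos\alpha \end{pmatrix},
\qquad \alpha \in [-\pi,\pi],
$$
which exhausts the determinant $-1$ orthogonal matrices in dimension two. By \leref{lem:symmetric_square_condition}, $R_I$ is critical for $W(\cdot\,;D_I)$ iff $(R_ID_I-\id)^2\in\Sym(2)$, and by \leref{lem:root_2x2} this is equivalent to the dichotomy
$$
R_ID_I-\id\in\Sym(2) \qquad\text{or}\qquad \tr(R_ID_I-\id)=0.
$$

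Next I would compute
$$
R_ID_I = \begin{pmatrix} d_i\cos\alpha & \ph d_j\sin\alpha \\ d_i\sin\alpha & -d_j\cos\alpha \end{pmatrix}.
$$
The symmetry condition amounts to $(d_j-d_i)\sin\alpha = 0$; under \assref{ass:Dpos} (so $d_i\neq d_j$), this forces $\sin\alpha=0$ and yields exactly the two diagonal critical points $R_I = \pm\diag(1,-1)$, at which direct substitution gives the energy levels $(d_i-1)^2+(d_j+1)^2$ and $(d_i+1)^2+(d_j-1)^2$. The trace condition reduces to $(d_i-d_j)\cos\alpha = 2$; this is solvable in $\alpha$ iff $|d_i-d_j|\geq 2$, with $\cos\alpha = 2/(d_i-d_j)$, and produces two genuinely non-diagonal critical points (from the two choices of sign of $\sin\alpha$) exactly when the inequality is strict, $|d_i-d_j|>2$. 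The boundary case $|d_i-d_j|=2$ recovers the previous diagonal solutions.

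For the critical value on the non-diagonal branch I would expand
$$
\sym(R_ID_I-\id) = \begin{pmatrix} d_i\cos\alpha-1 & \tfrac12(d_i+d_j)\sin\alpha \\ \tfrac12(d_i+d_j)\sin\alpha & -d_j\cos\alpha-1 \end{pmatrix},
$$
square its Frobenius norm and use $\sin^2\alpha = 1-\cos^2\alpha$ together with the algebraic identity
$$
(d_i^2+d_j^2) - \tfrac12(d_i+d_j)^2 = \tfrac12(d_i-d_j)^2,
$$
which was already exploited in \theref{theo:size_two_pos}. This collects the answer into $\tfrac12(d_i-d_j)^2\cos^2\alpha - 2(d_i-d_j)\cos\alpha + 2 + \tfrac12(d_i+d_j)^2$, and the relation $(d_i-d_j)\cos\alpha=2$ collapses the first three terms to zero, leaving $W(R_I\,;D_I) = \tfrac12(d_i+d_j)^2$.

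The main obstacle is purely bookkeeping: the sign flip in the lower-right entry of $R_I$ swaps several signs relative to the $\det R_I = +1$ case, so the symmetry condition now involves $\sin\alpha$ with coefficient $d_i-d_j$ (rather than $d_i+d_j$) and the trace condition involves $\cos\alpha$ with coefficient $d_i-d_j$ (rather than $d_i+d_j$). Once this is tracked carefully, every step is parallel to \theref{theo:size_two_pos}.
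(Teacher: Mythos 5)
Your proposal is correct and follows the paper's own proof line by line: both invoke \leref{lem:symmetric_square_condition} and \leref{lem:root_2x2} to split into the symmetry and trace conditions, use the same parametrization of the $\det = -1$ branch of $\O(2)$, and compute the critical value by the same algebraic simplification with $(d_i-d_j)\cos\alpha = 2$. The only cosmetic difference is that you write $\cos\alpha = 2/(d_i-d_j)$ where the paper writes $2/\abs{d_i-d_j}$, which coincide under the ordering $d_i > d_j$ of \assref{ass:Dpos}.
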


\begin{proof}
  By \leref{lem:symmetric_square_condition} $R_I$ is a critical
  point if and only if $(R_ID_I - \id)^2$ is symmetric.
  We may thus apply \leref{lem:root_2x2} which implies
  $R_ID_I - \id \in \Sym(2)$ or $\tr{R_ID_I - \id} = 0$.
  Using the explicit representation
  $$
  R_I = \begin{pmatrix}
    \cos \alpha  &  \sin \alpha \\
    \sin \alpha  & -\cos \alpha
  \end{pmatrix}
  $$
  the symmetry condition $R_ID_I - \id \in \Sym(2)$
  is equivalent to
  \begin{equation}
    (d_i - d_j)\sin\alpha = 0
  \end{equation}
  which has two solutions $R_I = \pm \diag(1,-1)$ since
  $d_i \neq d_j$ due to \assref{ass:Dpos}. The trace condition
  $\tr{R_ID_I - \id} = 0$ is equivalent to
  $(d_i - d_j)\cos\alpha = 2$ which can be solved for
  $\alpha$ if and only if $\abs{d_i - d_j} \geq 2$.
  Thus there are two non-diagonal solutions if and
  only if $\abs{d_i - d_j} > 2$.\\
  \smallskip
  In the first case $R_I = \pm \diag(1,-1)$, the critical values
  are immediately seen to be $(d_i - 1)^2 + (d_j + 1)^2$
  and $(d_i + 1)^2 + (d_j - 1)^2$, respectively.\\
  \smallskip
  In the second case, the critical values are calculated
  as follows. Observing that
  \begin{equation}
    \sym(R_I D_I - \id) =
    \begin{pmatrix}
      d_i \cos \alpha - 1               & \frac{1}{2}(d_i + d_j)\sin \alpha\\
      \frac{1}{2}(d_i + d_j)\sin \alpha &  -d_j \cos \alpha - 1
    \end{pmatrix}
  \end{equation}
  we use $\abs{d_i - d_j}\cos\alpha = 2$ to get
  \begin{align}
    \hsnorm{\sym(R_I D_I - \id)}^2
    &= (d_i \cos \alpha - 1)^2
    + (d_j \cos \alpha + 1)^2
    + \frac{1}{2}(d_i + d_j)^2 \sin^2 \alpha \notag\\
    &= (d_i^2 + d_j^2)\cos^2 \alpha - 2(d_i - d_j)\cos \alpha
    + 2 + \frac{1}{2} (d_i + d_j)^2 (1 - \cos^2 \alpha) \notag\\
    &=\frac{1}{2} (d_i + d_j)^2
    + \frac{1}{2}(d_i - d_j)^2 \cos^2\alpha
    - 2(d_i - d_j)\cos \alpha + 2 \notag\\
    &= \frac{1}{2} (d_i + d_j)^2 + 2 - 4 + 2
    =  \frac{1}{2} (d_i + d_j)^2\;.
  \end{align}
  This shows the claim.
\end{proof}

\begin{rem}[The positive choice $\det{R_I} = +1$ minimizes energy]
  \label{rem:pos_det_minimal}
  A direct comparison of the energy levels realized by
  the different choices for the determinant of $R_I$ is
  instructive. Summarizing our preceding results, we have
  for $\abs{I} = 1$, i.e., for a block of size one
  \begin{align}
    \det{R_I} \,=\, +1 &\quad\quad\mapsto\quad\quad (d_i - 1)^2\;,\\
    \det{R_I} \,=\, -1 &\quad\quad\mapsto\quad\quad (d_i + 1)^2
    \quad\geq\quad (d_i - 1)^2\;.
  \end{align}
  Similarly, for $\abs{I} = 2$, i.e., for a block of size two, we obtain
  \begin{align}
    \det{R_I} \,=\, +1 &\quad\quad\mapsto\quad\quad \frac{1}{2}\,(d_i - d_j)^2\;,\\
    \det{R_I} \,=\, -1 &\quad\quad\mapsto\quad\quad \frac{1}{2}\,(d_i + d_j)^2
    \quad\geq\quad \frac{1}{2}\,(d_i - d_j)^2\;.
  \end{align}
  The estimates follow from our \assref{ass:Dpos} on the entries $d_i > 0$
  of the diagonal matrix $D > 0$.
\end{rem}

\begin{rem}
  \label{rem:reducible_blocks}
  The diagonal critical points $R_I = \pm \id$
  and $R_I = \pm \diag(1,-1)$ reduce to size one blocks
  (or index subsets $\abs{I} = 1$) in the block
  decomposition \eqref{eq:critical_point_block_structure}.
\end{rem}

\begin{rem}[On non-distinct entries of $D$]
  If we relax the \assref{ass:Dpos} and allow for
  $$
  d_1 \geq d_2 \geq \ldots \geq d_n > 0
  $$
  then there are degenerate critical points with
  $\det{R_I} = -1$ if and only if $d_i = d_j$.
  The corresponding critical value is the same as
  that realized by the diagonal matrices
  $\pm \diag(1,-1)$.
\end{rem}
 \countres
\makeatletter{}\section{Global minimization of the Cosserat shear-stretch energy}
\label{sec:optimality}
Combining the results of the two preceding sections, we can now
describe the critical values of the Cosserat shear-stretch energy
$W(R\,;D)$ which are attained at the critical points.  The main result
of this section is a procedure (algorithm) which traverses the set of
critical points in a way that reduces the energy at every step of the
procedure and finally terminates in the subset of global minimizers.

Technically, we label the critical points by certain partitions
of the index set $\{1,\ldots,n\}$ containing only subsets $I$ with one
or two elements. In the last section, we have seen that the subsets
$I$ and a choice of sign for $\det{R_I}$ uniquely characterize a critical
point $R \in \SO(n)$.

The next theorem expresses the value of $W(R\,;D)$ realized
by a critical point in terms of the labeling partition and choice of
determinants $\det{R_I}$ which characterize it.

\begin{theo}[Characterization of critical points and values]
  \label{theo:critical_values}
  Let $D \eqdef \diag(d_1,\ldots,d_n) > 0$ satisfy \assref{ass:Dpos}, i.e.,
  $d_1>d_2>\ldots>d_n>0$. Then the critical points $R \in \SO(n)$ of
  the objective function
  $$
  W(R\,;D) \eqdef \hsnorm{\sym(RD - \id)}^2
  $$
  can be classified according to partitions of the index set
  $\{1,\ldots,n\}$ into subsets of size one or two and
  choices of signs for the determinant $\det{R_I}$
  for each subset $I$. The subsets of size two
  $I = \{i,j\}$ satisfy
  $$
  \begin{cases}
  \phantom{|} d_i + d_j \phantom{|} > 2,     & \quad \det{R_I} = +1\;,\quad\text{and}\\
  \abs{d_i - d_j} > 2, & \quad \det{R_I} = -1\;.
  \end{cases}
  $$
  The critical values are given by
  $$
  W(R\,;D) =   \sum_{\substack{I = \{i\} \\ \det{R_I} = 1}} (d_i - 1)^2
  + \sum_{\substack{I = \{i\}   \\ \det{R_I} = -1}} (d_i + 1)^2
  + \sum_{\substack{I = \{i,j\} \\ \det{R_I} = 1}}  \frac{1}{2}(d_i - d_j)^2
  + \sum_{\substack{I = \{i,j\} \\ \det{R_I} = -1}} \frac{1}{2}(d_i + d_j)^2\;.
  $$
\end{theo}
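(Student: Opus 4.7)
The plan is to chain together the structural results already proved in Sections~2, 3, and 4. First I invoke \leref{lem:symmetric_square_condition}: a rotation $R \in \SO(n)$ is critical for $W(R;D)$ exactly when $X(R) \eqdef RD - \id$ has a symmetric square. Applied to this particular $X(R)$, \theref{theo:real_square_roots} produces an orthogonal matrix $T \in \O(n)$ in which $X(R)$ decomposes into one- and two-dimensional diagonal blocks, so that $\R^n$ splits as an orthogonal direct sum of $X(R)$-invariant subspaces $V_i$ each of dimension one or two. By \leref{lem:sim_inv}, which applies since \assref{ass:Dpos} entails \assref{ass:D}, each $V_i$ is simultaneously invariant under $R$ and $D$.

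Next I use the fact that \assref{ass:Dpos} forces the entries $d_1 > d_2 > \ldots > d_n > 0$ to be pairwise distinct, so the only $D$-invariant subspaces are the coordinate subspaces of $\R^n$. Hence each $V_i$ is spanned by a subset of the standard basis, and $T$ may be chosen as a permutation matrix. The collection of subspaces $V_i$ therefore corresponds to a partition of $\{1,\ldots,n\}$ into subsets $I$ of size one or two, with an associated orthogonal block $R_I \in \O(\abs{I})$. Since the Frobenius norm is invariant under the permutation $T$, the total energy decomposes as the sum of block contributions
\begin{equation*}
W(R;D) \;=\; \sum_I W(R_I; D_I)\;,
\end{equation*}
and each $R_I$ must individually be a critical point of $W(\cdot; D_I)$ on $\O(\abs{I})$.

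I then read off the possibilities for $R_I$ and the corresponding critical values directly from the subproblem analyses in Section~4. For $\abs{I}=1$, \theref{theo:size_one} gives $R_I = \pm 1$ with critical values $(d_i \mp 1)^2$. For $\abs{I}=2$ with $\det R_I = +1$, \theref{theo:size_two_pos} together with \remref{rem:reducible_blocks} shows that the diagonal critical points $R_I = \pm \id$ are already captured by partitioning $I$ into two size-one subsets with matching sign choices, whereas a genuine (non-diagonal) size-two block exists precisely when $d_i + d_j > 2$ and yields the value $\tfrac{1}{2}(d_i - d_j)^2$. The case $\det R_I = -1$ is analogous via \theref{theo:size_two_neg}, producing the condition $\abs{d_i - d_j} > 2$ and the critical value $\tfrac{1}{2}(d_i + d_j)^2$. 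Summing the four types of block contributions gives the stated expression for $W(R;D)$.

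The main subtlety is the bookkeeping for the reducible diagonal size-two critical points $\pm\id$ and $\pm\diag(1,-1)$: to avoid double counting, one commits to always viewing these as two size-one blocks, so that genuine size-two blocks occur only under the strict inequalities $d_i + d_j > 2$ or $\abs{d_i - d_j} > 2$ recorded in the theorem. Once this convention is fixed, the correspondence between labeled partitions and critical points is a bijection, and the critical-value formula follows by collecting terms.
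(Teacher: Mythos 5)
Your argument is correct and follows essentially the same route as the paper: the paper's proof is the same chain (symmetric-square criterion $\to$ orthogonal block decomposition via \theref{theo:real_square_roots} $\to$ simultaneous $R$-$D$-invariance $\to$ coordinate subspaces under \assref{ass:Dpos} $\to$ subproblem analysis from \secref{sec:subproblems}) stated tersely, and you have merely unpacked it. One small imprecision: the correspondence between labeled partitions and critical points is not a bijection but a finite-to-one map, since each genuine size-two block gives \emph{two} critical points (the two signs of $\alpha$) sharing the same critical value, as the paper notes in the footnote to \exaref{exa:algo}; this does not affect the critical-value formula.
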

\begin{proof}
  A suitable partition of the index set $\{1,\ldots,n\}$
  can be constructed as detailed in \secref{sec:critical_points}.
  The contributions of the subsets $I$ of size one and two are
  given by the theorems of \secref{sec:subproblems}. It suffices
  to consider the non-diagonal critical points for the subproblems
  of size two, because the diagonal cases can be accounted for by
  splitting the subset $I = \{i,j\}$ into two subsets $\{i\}$ and
  $\{j\}$  of size one, see~\remref{rem:reducible_blocks}.
\end{proof}

\begin{rem}[On non-distinct entries of $D$]
  If we relax the \assref{ass:Dpos} and allow for
  $$
  d_1 \geq d_2 \geq \ldots \geq d_n > 0
  $$
  then the $D$- and $R$-invariant subspaces $V_i$ are not
  necessarily coordinate subspaces. This produces non-isolated
  critical points but does not change the formula for the
  critical values.
\end{rem}

It seems instructive to precede our further development with an
outline of the scheme which allows us to traverse the set of critical
points such that the energy decreases in every step and terminates in
a global minimizer. Note that the scheme is conveniently formulated
in terms of the labeling partitions which classify the critical points:
\begin{scheme}[Construction of a minimizing sequence of critical points]
  Starting from the labeling partition of an arbitrary critical point:
  \begin{enumerate}
  \item Choose the positive sign $\det{R_I} = +1$ for each subset
    of the partition (cf.~\remref{rem:pos_det_minimal}
    and~\remref{rem:pos_det_admiss}).
  \item Disentangle all overlapping blocks for $n > 3$
    (cf.~\leref{lem:overlap}).
  \item Successively shift all $2\times2$-blocks to the lowest
    possible index, i.e., collect the blocks of size two
    as close to the upper left corner of the matrix $R$
    as possible (cf.~\leref{lem:comparison}).
  \item Introduce as many additional $2\times 2$-blocks by joining
    adjacent blocks of size $1$ as the constraint $d_i + d_j > 2$
    allows (cf.~\leref{lem:comparison}).
  \end{enumerate}
\end{scheme}
At the end of this section, we provide an \exaref{exa:algo}.

In order to compute the global minimizers $R \in \SO(n)$ for the
Cosserat shear-stretch energy $W(R\,;D)$, we have to compare
all the critical values which correspond to the different
partitions and choices of the signs of the determinants
in the statement of \theref{theo:critical_values}. In what follows,
we prove the reduction steps of the preceding scheme.

\begin{rem}
\label{rem:pos_det_admiss}
Notice that under \assref{ass:Dpos}, we have that
$\abs{d_i - d_j} > 2$ implies that $d_i + d_j > 2$. Therefore,
it is always possible to replace negative determinant choices
by positive ones. In the process the value of $W(R\,;D)$ is
reduced. Therefore, if $R$ is a
critical point which is a global minimizer of $||\sym(RD-\id)||^2$,
it only contains $R_I$ with determinant $\det{R_I} = 1$.
\end{rem}
This allows us to assume that $\det{R_I} = 1$ for all subsets
$I$ without any loss of generality.

The following lemma shows that blocks of size two are always
favored \emph{whenever they exist}.

\begin{lem}[Comparison lemma]
  \label{lem:comparison}
  If $d_i + d_j > 2$ then the difference between
  the critical values of $W(R\,;D)$ corresponding to the
  choice of a size two subset $I = \{i,j\}$ as compared
  to the choice of two size one subsets $\{i\}$, $\{j\}$
  is given by
  $$
  -\frac{1}{2}(d_i+d_j-2)^2.
  $$
\end{lem}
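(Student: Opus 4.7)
The proof is essentially a direct algebraic verification. Invoking the results of Section~\ref{sec:subproblems}, I know precisely what each of the two candidate critical values looks like. For the size two choice $I = \{i,j\}$ with $\det R_I = +1$ (the non-diagonal critical point, which exists precisely because $d_i + d_j > 2$), \theref{theo:size_two_pos} gives the contribution $\tfrac{1}{2}(d_i - d_j)^2$. For the alternative choice of two size one subsets $\{i\}, \{j\}$ with positive determinants, \theref{theo:size_one} gives the contribution $(d_i - 1)^2 + (d_j - 1)^2$.

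The plan is then simply to form the difference
\begin{equation*}
  \Delta \;=\; \tfrac{1}{2}(d_i - d_j)^2 \;-\; (d_i - 1)^2 \;-\; (d_j - 1)^2
\end{equation*}
and to simplify it. Expanding both sides, I collect $-\tfrac{1}{2}(d_i^2 + d_j^2) - d_i d_j + 2(d_i + d_j) - 2$, which regroups to $-\tfrac{1}{2}(d_i + d_j)^2 + 2(d_i + d_j) - 2$. Completing the square in the quantity $s := d_i + d_j$, I recognize this as $-\tfrac{1}{2}(s - 2)^2 = -\tfrac{1}{2}(d_i + d_j - 2)^2$, which is exactly the claimed value.

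There is essentially no obstacle here; the only subtlety is making sure I compare the \emph{right} two critical values. Under the standing \assref{ass:Dpos} and in view of \remref{rem:pos_det_minimal} and \remref{rem:pos_det_admiss}, choosing $\det R_I = +1$ is always at least as good as $\det R_I = -1$, so comparing the two positive-determinant variants is the meaningful comparison; moreover, the existence of the non-diagonal critical point for the size two block uses precisely the hypothesis $d_i + d_j > 2$, which makes the statement meaningful. Since $(d_i + d_j - 2)^2 \geq 0$ with equality only at the boundary $d_i + d_j = 2$, the identity $\Delta = -\tfrac{1}{2}(d_i + d_j - 2)^2$ also immediately yields the qualitative conclusion: forming a size two block strictly lowers the energy whenever the block is actually available, which is what the subsequent minimization scheme will exploit.
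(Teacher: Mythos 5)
Your proof is correct and takes essentially the same route as the paper: form the difference $\tfrac{1}{2}(d_i - d_j)^2 - (d_i-1)^2 - (d_j-1)^2$ of the critical values from Theorems~\ref{theo:size_two_pos} and~\ref{theo:size_one} and simplify it to $-\tfrac{1}{2}(d_i + d_j - 2)^2$. The only difference is that you spell out the intermediate algebra and add some framing remarks; the paper simply states the identity and leaves the expansion to the reader.
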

\begin{proof}
We subtract the corresponding contributions of the subsets and
simplify
$$
\frac{1}{2}(d_i - d_j)^2 - (d_i-1)^2 - (d_j-1)^2
  = -\frac{1}{2}(d_i + d_j - 2)^2\;.
  $$
This proves the claim.
\end{proof}
\medskip

Let us rewrite $W(R\,;D)$ in a slightly different form in order to
distill the contributions of the size two blocks in the partition.

\begin{cor}
  \label{cor:savings}
For the choices of $\det{R_I} = 1$ there holds
$$
W(R\,;D) = \hsnorm{\sym(RD-\id)}^2 = \sum_{i=1}^n(d_i - 1)^2
                  -\frac{1}{2}\sum_{I = \{i,j\}} (d_i+d_j-2)^2.
$$
\end{cor}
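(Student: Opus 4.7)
The plan is to derive the identity by rewriting the critical value formula from \theref{theo:critical_values} (specialized to the case $\det{R_I} = 1$ for every subset $I$ in the partition) so that the contribution of each size-two block is expressed as a ``full" size-one contribution plus a correction. The correction is exactly the quantity computed in \leref{lem:comparison}.

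First I would start from the formula of \theref{theo:critical_values} under the hypothesis $\det{R_I} = 1$, namely
\begin{equation*}
  W(R\,;D) \;=\; \sum_{I = \{i\}}(d_i-1)^2 \,+\, \sum_{I = \{i,j\}} \tfrac{1}{2}(d_i-d_j)^2\;.
\end{equation*}
For each two-element subset $I = \{i,j\}$ in the partition I add and subtract $(d_i-1)^2 + (d_j-1)^2$, which rewrites the corresponding summand as
\begin{equation*}
  \tfrac{1}{2}(d_i-d_j)^2 \;=\; (d_i-1)^2 + (d_j-1)^2 \;+\; \Bigl[\tfrac{1}{2}(d_i-d_j)^2 - (d_i-1)^2 - (d_j-1)^2\Bigr]\;.
\end{equation*}
By \leref{lem:comparison} the bracketed term equals $-\tfrac{1}{2}(d_i+d_j-2)^2$.

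Summing over all subsets and using that the partition covers each index in $\{1,\ldots,n\}$ exactly once, the contributions $(d_i-1)^2$ reassemble into the single unrestricted sum $\sum_{i=1}^n (d_i-1)^2$, leaving precisely the correction term $-\tfrac{1}{2}\sum_{I=\{i,j\}}(d_i+d_j-2)^2$ behind. This yields the claimed identity. There is no real obstacle: the only non-trivial ingredient is the algebraic simplification already performed in \leref{lem:comparison}, so the proof reduces to a short bookkeeping step.
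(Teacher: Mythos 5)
Your proof is correct and follows essentially the same route as the paper: both start from the critical-value formula of Theorem~\ref{theo:critical_values} with $\det R_I=1$, and both use the Comparison Lemma~\ref{lem:comparison} to trade each size-two contribution $\tfrac{1}{2}(d_i-d_j)^2$ for $(d_i-1)^2+(d_j-1)^2-\tfrac{1}{2}(d_i+d_j-2)^2$, then reassemble the $(d_i-1)^2$ terms across the partition. You merely spell out the add-and-subtract bookkeeping that the paper leaves implicit.
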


\begin{proof}
  The first term in the formula is the value realized by $W(R\,;D)$
  for the trivial partition into $n$ subsets of size one. By virtue
  of the Comparison Lemma~\ref{lem:comparison} each block of size
  two reduces the critical value by the amount
  $\frac{1}{2}(d_i + d_j-2)^2$.
\end{proof}

Let us now consider the case of dimension $n = 3$ explicitly in order
to prepare the exposition of the higher dimensional case.
\begin{theo}
  \label{theo:global_min_3d}
  Let $d_1 > d_2 > d_3 > 0$. If $d_1 + d_2 \leq 2$ then the
  global minimum of
  $$
  W(R\,;D) \eqdef \hsnorm{\sym(RD-\id)}^2
  $$
  occurs at $R = \id$ and is given by
  $$
  W(R\,;D) = (d_1 - 1)^2 + (d_2 - 1)^2 + (d_3 - 1)^2 \;.
  $$
  If $d_1 + d_2 > 2$ then the global minimum is realized by
  either of two critical points of the form
  $$
  R = \begin{pmatrix}
    \cos \alpha & -\sin\alpha & 0\\
    \sin \alpha &  \cos\alpha & 0\\
    0           & 0           & 1
  \end{pmatrix}
  \quad\text{with}\quad (d_1 + d_2)\cos \alpha = 2\;.
  $$
  In this case the global minimum is
  $$
  W(R\,;D) = (d_1 - 1)^2 + (d_2 - 1)^2 + (d_3 - 1)^2 - \frac{1}{2}(d_1 + d_2 - 2)^2
  = \frac {1}{2}(d_1 - d_2)^2 + (d_3 - 1)^2\;.
  $$
\end{theo}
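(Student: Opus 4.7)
The plan is to derive Theorem~\ref{theo:global_min_3d} as a direct consequence of the machinery already assembled in Sections~\ref{sec:critical_points}--\ref{sec:optimality}. By Theorem~\ref{theo:critical_values}, every critical point of $W(R\,;D)$ corresponds to a partition of $\{1,2,3\}$ into blocks of size one or two (together with a sign choice for $\det R_I$ on each block). By Remark~\ref{rem:pos_det_admiss}, restricting to $\det R_I = +1$ for every block never increases $W$, so the search for global minimizers may be carried out among such positively oriented critical points. Corollary~\ref{cor:savings} then reduces the problem to maximizing the savings functional
\begin{equation*}
  \sum_{I = \{i,j\}} \tfrac{1}{2}(d_i + d_j - 2)^2
\end{equation*}
over all admissible partitions, where a size-two block $\{i,j\}$ is admissible iff $d_i + d_j > 2$.

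In dimension $n = 3$, the only partitions into blocks of size at most two are the trivial one $\{\{1\},\{2\},\{3\}\}$ and those containing a single size-two block: $\{\{1,2\},\{3\}\}$, $\{\{1,3\},\{2\}\}$, $\{\{2,3\},\{1\}\}$. Hence at most one $2\times 2$-block is ever present, so the sum collapses to a single term $\tfrac{1}{2}(d_i + d_j - 2)^2$.

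In the case $d_1 + d_2 \leq 2$, Assumption~\ref{ass:Dpos} gives $d_1 + d_3 < d_1 + d_2 \leq 2$ and $d_2 + d_3 < 2$ as well, so no size-two block is admissible. The only critical point available is therefore the one corresponding to the trivial partition, namely $R = \id$, and its energy is $\sum_{i=1}^3 (d_i-1)^2$. In the case $d_1 + d_2 > 2$, since $d_1 > d_2 > d_3 > 0$ we have $d_1 + d_2 > d_1 + d_3 > d_2 + d_3$, so among admissible pairs the quantity $(d_i+d_j-2)^2$ is maximized by $\{i,j\} = \{1,2\}$ (whether or not the other two pairs are themselves admissible, since $x \mapsto (x-2)^2$ is strictly increasing on $[2,\infty)$). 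The optimal partition is therefore $\{\{1,2\},\{3\}\}$, and by Theorem~\ref{theo:size_two_pos} the corresponding critical rotations have the block form written in the statement with $(d_1+d_2)\cos\alpha = 2$. The associated energy value follows by plugging this partition into Corollary~\ref{cor:savings} and then using the identity
\begin{equation*}
  (d_1-1)^2 + (d_2-1)^2 - \tfrac{1}{2}(d_1+d_2-2)^2 \;=\; \tfrac{1}{2}(d_1 - d_2)^2,
\end{equation*}
which is elementary.

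No step constitutes a genuine obstacle; the real content has been done in Sections~\ref{sec:characterization}--\ref{sec:subproblems} and in Corollary~\ref{cor:savings}. The only point requiring care is the combinatorial argument that the pair $\{1,2\}$ dominates $\{1,3\}$ and $\{2,3\}$ in the savings functional, which, as noted, reduces to the monotonicity of $(x-2)^2$ on $[2,\infty)$ together with the ordering of the $d_i$.
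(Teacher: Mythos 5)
Your proof is correct and follows essentially the same route as the paper: both proofs reduce to Corollary~\ref{cor:savings}, observe that in $n=3$ at most one size-two block can appear, and conclude from the ordering $d_1 > d_2 > d_3 > 0$ that the pair $\{1,2\}$ maximizes the savings term $\tfrac{1}{2}(d_i+d_j-2)^2$ whenever any size-two block is admissible. Your version is slightly more explicit in enumerating the partitions of $\{1,2,3\}$ and in invoking Remark~\ref{rem:pos_det_admiss} and Theorem~\ref{theo:size_two_pos} by name, but the underlying argument is identical.
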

\begin{proof}
  If $d_1 + d_2 \leq 2$ then $d_i + d_j \leq 2$ for all index pairs
  $(i,j)$ and there are no blocks of size two at the global minimum.
  If $d_1 + d_2 > 2$ then the choice of partition $\{1,2\} \sqcup \{3\}$
  is admissible. \coref{cor:savings} shows that this is always favorable
  compared to the partition into three size one subsets
  $\{1\} \sqcup \{2\} \sqcup \{3\}$.
  Whether or not other size two subsets are admissible according to
  the inequalities $d_i + d_j > 2$, the partition
  $\{1,2\} \sqcup \{3\}$ is always optimal. This follows from the
  ordering $d_1 > d_2 > d_3 > 0$ which implies that the
  partition-dependent term
  $\frac{1}{2}(d_i + d_j -2)^2$ in \coref{cor:savings} is maximized
  for $I = \{i,j\} = \{1,2\}$.
\end{proof}

In general, a deformation gradient $F \in \GL^+(n)$ can have non-distinct
singular values $\nu_i = \nu_j$, $i \neq j$. This situation may arise,
e.g., due to a symmetry assumption in mechanics.

\begin{rem}[On non-distinct entries of $D$]
  Assume $d_1 \geq d_2 \geq d_3 > 0$. Our results imply the following:\\
  \medskip
  If $d_1 + d_2 \leq 2$, then all $V_i$ are of dimension $1$. Since
  the restriction of a given minimizer $R$ to each $V_i$ satisfies
  $R|_{V_i} = \id$, we see that $R = \id$. The global minimum of
  the Cosserat shear-stretch energy is given by
  $$
  W(R\,;D) = (d_1 - 1)^2 + (d_2 - 1)^2 + (d_3 - 1)^2\;.
  $$
  If $d_1 + d_2 > 2$, then for a global minimizer $R$ there is
  a one-dimensional $R$-invariant subspace which is also $D$-invariant
  with associated eigenvalue $d_3$. Therefore, $R$ is a rotation with
  axis in the $d_3$-eigenspace of $D$. The rotation angle satisfies
  the relation $(d_1 + d_2)\cos \alpha = 2$ and the global minimum
  of the energy is given by
  $$
  W(R\,;D) = (d_1 - 1)^2 + (d_2 - 1)^2 + (d_3 - 1)^2 - \frac{1}{2}(d_1 + d_2 - 2)^2
  = \frac {1}{2}(d_1 - d_2)^2 + (d_3 - 1)^2\;.
  $$
  This case further splits into several subcases all realizing the
  same energy level according to the multiplicity of
  the eigenvalue $d_3$:\\
  \smallskip
  If $d_1 \geq d_2 > d_3$, i.e., the multiplicity of $d_3$ is one,
  then there are two isolated global minimizers which are rotations
  with rotation angle $\arccos(2/ (d_1 + d_2))$ with respect
  to either of the two half-axes in $\vspan{e_3}$ (as in the
  case of distinct entries of $D$ discussed
  in \theref{theo:global_min_3d}).\\
  \smallskip
  If $d_1 > d_2 = d_3$, i.e., the multiplicity of $d_3$ is two,
  then the global minimizers $R$ form a one-dimensional family
  of rotations with rotation angle $\arccos(2/(d_1 + d_2))$ and
  rotation half-axes in the $d_3$-eigenplane $\vspan{e_2,e_3}$
  of $D$.\\
  \smallskip
  If $d_1 = d_2 = d_3$, i.e., the multiplicity of $d_3$ is three,
  then there is a two-dimensional family of global minimizers
  $R$ which are rotations with rotation angle
  $\arccos(2/(d_1 + d_2))$ about arbitrary half-axes in $\R^3$.
\end{rem}
It is interesting that the set of global minimizers is
\emph{connected} in the last two cases where $d_2 = d_3$. This
allows for a continuous transition between minimizers with opposite
half-axes which are inverses of each other.

To study the global minimizers for the Cosserat shear-stretch
energy in arbitrary dimension $n \geq 4$, we need to investigate
the relative location of the size two subsets of the partition.

\begin{lem}
  \label{lem:overlap}
  Let $R \in \SO(n)$ be a global minimizer for $W(R\,;D)$.
  Then $R$ cannot contain overlapping size two subsets, i.e.,
  $I = \{i_1,i_4\}$, $J = \{i_2,i_3\}$, with $i_1 < i_2 < i_3 < i_4$.
\end{lem}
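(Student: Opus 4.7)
The plan is to argue by contradiction via Corollary~\ref{cor:savings}. By Remark~\ref{rem:pos_det_admiss}, a global minimizer has $\det R_I=+1$ for every block, so its energy equals the fixed baseline $\sum_{i=1}^{n}(d_i-1)^2$ minus the total savings $\tfrac{1}{2}\sum_{I=\{i,j\}}(d_i+d_j-2)^2$; maximising the savings is therefore equivalent to minimising $W(R\,;D)$. Assuming $R$ contains nested size-two blocks $I=\{i_1,i_4\}$ and $J=\{i_2,i_3\}$ with $i_1<i_2<i_3<i_4$, I will exhibit an alternative admissible partition on the same four indices (leaving all other blocks untouched) whose savings strictly exceed those coming from $I$ and $J$, yielding the contradiction.

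Write $a=d_{i_1}$, $b=d_{i_2}$, $c=d_{i_3}$, $d=d_{i_4}$, so that Assumption~\ref{ass:Dpos} yields $a>b>c>d>0$ while admissibility of $I$ and $J$ gives $a+d>2$ and $b+c>2$. I split into two cases according to the sign of $c+d-2$.

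If $c+d>2$, the re-pairing $\{i_1,i_2\}\sqcup\{i_3,i_4\}$ is admissible. Since $(a+b-2)+(c+d-2)=(a+d-2)+(b+c-2)$, comparing sums of squares reduces to comparing the corresponding absolute differences. The identity $a+b-c-d=[(a-b)-(c-d)]+2(b-d)$, combined with $a+b-c-d>0$ (immediate from $a>c$, $b>d$), readily yields $|a+b-c-d|>|(a-b)-(c-d)|$, so the new savings strictly dominate the old. If instead $c+d\leq 2$, that re-pairing is no longer admissible, and I replace $I,J$ by the single pair $\{i_1,i_2\}$ (admissible since $a+b>a+d>2$) together with singletons $\{i_3\}$ and $\{i_4\}$. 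The required strict inequality $(a+b-2)^2>(a+d-2)^2+(b+c-2)^2$ rewrites, via $(a+b-2)^2-(a+d-2)^2=(b-d)(2a+b+d-4)$, as $(b-d)(2a+b+d-4)>(b+c-2)^2$. This follows by chaining the two estimates $0<b+c-2\leq b-d$ (equivalent to $c+d\leq 2$ together with admissibility $b+c>2$) and $2a+b+d-4>b-d$ (equivalent to the strict admissibility $a+d>2$), giving $(b-d)(2a+b+d-4)>(b-d)^2\geq(b+c-2)^2$ with overall strict inequality.

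The main obstacle is the second case: the natural pair-swap breaks admissibility when $c+d\leq 2$, and the key insight is to instead drop down to a single pair plus two singletons and exploit both admissibility hypotheses simultaneously in the short chain above. Together, the two cases cover all configurations and each exhibits a strictly more favourable critical partition, contradicting the assumed global minimality of $R$.
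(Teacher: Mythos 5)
Your proposal is correct and follows essentially the same approach as the paper: identical contradiction structure via \coref{cor:savings}, the same case split on whether $d_{i_3}+d_{i_4}>2$, and the same alternative partitions $\{i_1,i_2\}\sqcup\{i_3,i_4\}$ and $\{i_1,i_2\}\sqcup\{i_3\}\sqcup\{i_4\}$. Only the low-level algebra verifying positivity differs (the paper obtains the clean factorizations $(d_1-d_3)(d_2-d_4)$ and $(d_1-d_3)(d_2+d_3-2)$ by direct expansion and a monotonicity substitution, whereas you use a sums-of-squares comparison and a chained difference-of-squares bound), which is a cosmetic distinction.
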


\begin{proof}
  We assume that $R$ is a global minimizer corresponding to a partition
  containing two overlapping subsets as described above and derive a
  contradiction.\\
  It suffices to consider the case $i_1 = 1, i_2 = 2, i_3 = 3$ and
  $i_4 = 4$ with the general case being completely analogous. We recall
  the ordering $d_1 > d_2 > d_3 > d_4 > 0$.
  \\
  \smallskip
  There are two cases to consider:\\[2em]
  \medskip
  {\bf Case 1:} $d_3 + d_4 > 2$. In this case, we can consider
  another critical point $\mathring{R}$ corresponding to the
  partition $\{1,2\} \sqcup \{3,4\}$ instead
  of $\{1,4\} \sqcup \{2,3\}$. By \coref{cor:savings} we have
  \begin{align*}
    W(R\,;D) - W(\mathring{R}\,;D)
    &=
    \frac{1}{2} (d_1 + d_2-2)^2
  + \frac{1}{2} (d_3 + d_4-2)^2
  - \frac{1}{2} (d_1 + d_4-2)^2
  - \frac{1}{2} (d_2 + d_3-2)^2\\
  &= d_1d_2+d_3d_4 - d_1d_4-d_2d_3
   = (d_1-d_3)(d_2-d_4) > 0.
  \end{align*}
  Thus $R$ is not a global minimum of $W(R\,;D)$.\\[2em]
  {\bf Case 2:} $d_3 + d_4 \leq 2$. In this case, we can not have
    the size two subset $\{3,4\}$. However, it is possible
    to decrease the value of $W(R\,;D)$ by choosing another
    critical point $\mathring{R}$ corresponding to the partition
    $\{1,2\} \sqcup \{3\} \sqcup \{4\}$ instead of
    $\{1,4\} \sqcup \{2,3\}$. By \coref{cor:savings} we have
  \begin{align*}
    W(R\,;D) - W(\mathring{R}\,;D)
    &=
    \frac{1}{2} (d_1 + d_2-2)^2
  - \frac{1}{2} (d_1 + d_4-2)^2
  - \frac{1}{2} (d_2 + d_3-2)^2\\
  &\geq \frac{1}{2} (d_1 + d_2-2)^2
  - \frac{1}{2} (d_1 + (2 - d_3)-2)^2
  - \frac{1}{2} (d_2 + d_3-2)^2\\
  &= \frac{1}{2}(d_1 + d_2 - 2)^2 - \frac{1}{2}(d_1 - d_3)^2 - \frac{1}{2}(d_2 + d_3-2)^2\\
  & =(d_1 - d_3)(d_2 + d_3 - 2) > 0.
  \end{align*}
  In the first inequality we use the fact that for $d_1 + d_4 \geq 2$
  the function $(d_1 + d_4 - 2)^2$ is increasing in $d_4$ and
  $d_4 \leq 2 - d_3$ by assumption. This shows that $R$ is not a global
  minimum of $W(R\,;D)$.\\
  \medskip
  We arrive at a contradiction in both cases which proves the statement.
\end{proof}

\medskip
We are now ready to state and prove the general $n$-dimensional case.

\begin{theo}
  \label{theo:global_min_nd}
  Let $D \eqdef \diag(d_1,\ldots,d_n) > 0$ with ordered entries
  $d_1 > d_2 > \ldots > d_n > 0$. Let us fix the maximum $k \in \N_0$
  for which $d_{2k-1} + d_{2k} > 2$. Any global minimizer $R \in \SO(n)$ of
  $$
  W(R\,;D) \eqdef \hsnorm{\sym(RD - \id)}^2
  $$
  corresponds to a partition of the index set $\{1,\ldots,n\}$ with
  $k \geq 0$ leading subsets of size two
  $$
  \underbrace{\{1,2\} \sqcup \{3,4\} \sqcup \ldots \sqcup \{2k-1, 2k\}}_{k\;\rm{subsets\,of\,size\,two}}
  \;\sqcup\;
  \underbrace{\{2k+1\} \sqcup \ldots \sqcup \{n\}}_{(n - 2k)\;\text{subsets\,of\,size one}}
  $$
  in the classification of critical points provided by \theref{theo:critical_values}. The global minimum of $W(R\,;D)$ is given by
  \begin{align*}
    W^{\rm red}(D) \eqdef& \min_{R \in \SO(n)}{W(R\,;D)}
    = \sum_{i=1}^n(d_i-1)^2 - \frac{1}{2}\sum_{i=1}^k (d_{2i-1}+d_{2i}-2)^2\\
    =& \frac{1}{2}\sum_{i=1}^k (d_{2i-1} - d_{2i})^2 + \sum_{i=2k+1}^n (d_i-1)^2\;.
  \end{align*}
\end{theo}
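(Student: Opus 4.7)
The strategy is to combine the characterization of critical points in Theorem~\ref{theo:critical_values} with the monotone reduction steps hinted at by the scheme in this section. First, by Remark~\ref{rem:pos_det_admiss} one may restrict attention to critical points whose labeling partitions have $\det R_I = +1$ in every block; Corollary~\ref{cor:savings} then rewrites the energy as
\begin{equation*}
W(R\,;D) \;=\; \sum_{i=1}^n (d_i - 1)^2 \,-\, \frac{1}{2}\sum_{I = \{i,j\}} (d_i + d_j - 2)^2\,,
\end{equation*}
so minimizing $W(R\,;D)$ over critical points is equivalent to \emph{maximizing} $\sum_{I = \{i,j\}} (d_i + d_j - 2)^2$ over admissible pair structures, subject to the feasibility constraint $d_i + d_j > 2$ on every pair and the non-nesting restriction of Lemma~\ref{lem:overlap}.

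The plan is to proceed by induction on $n$. The base cases $n \leq 3$ are supplied by Theorem~\ref{theo:global_min_3d} together with the subproblem analysis of Section~\ref{sec:subproblems}. For the inductive step, the pivotal claim is that \emph{if $d_1 + d_2 > 2$, then every global minimizer contains the block $\{1,2\}$}. Once this is established, deleting that block reduces the problem to the analogous one for $\diag(d_3,\ldots,d_n)$, and the inductive hypothesis yields the claimed leading-pair structure with one fewer pair. Conversely, if $d_1 + d_2 \leq 2$ then monotonicity of $D$ forces $d_i + d_j \leq d_1 + d_2 \leq 2$ for every index pair, so no pair is feasible, $k = 0$, and the unique minimizer is $R = \id$.

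The main obstacle is proving the pivotal claim, which I propose to do by a case analysis on the way indices $1$ and $2$ appear in an arbitrary optimal partition. If both are singletons, Lemma~\ref{lem:comparison} directly shows that joining them into the pair $\{1,2\}$ strictly decreases $W$. If $1$ lies in a pair $\{1,i\}$ with $i \geq 3$ while $2$ is a singleton (or the symmetric situation), then the swap $\{1,i\} \sqcup \{2\} \mapsto \{1,2\} \sqcup \{i\}$ is an improvement because $d_2 > d_i$ gives $(d_1 + d_2 - 2)^2 > (d_1 + d_i - 2)^2 > 0$. The subtle case is when $1$ and $2$ lie in distinct pairs $\{1,i\}$ and $\{2,j\}$ with $i,j \geq 3$: if additionally $d_i + d_j > 2$, the swap to $\{1,2\} \sqcup \{i,j\}$ yields the positive gain $2(d_1 - d_j)(d_2 - d_i)$, exactly as in Case~1 of the proof of Lemma~\ref{lem:overlap}; if instead $d_i + d_j \leq 2$, I would swap to $\{1,2\} \sqcup \{i\} \sqcup \{j\}$ and verify the inequality $(d_1 + d_2 - 2)^2 > (d_1 + d_i - 2)^2 + (d_2 + d_j - 2)^2$ by setting $a \eqdef d_1 + d_i - 2 > 0$, $b \eqdef d_2 + d_j - 2 > 0$, $c \eqdef 2 - d_i - d_j > 0$, whence $d_1 + d_2 - 2 = a + b + c$ and the claim reduces to the elementary estimate $(a + b + c)^2 > a^2 + b^2$. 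These subcases together with Lemma~\ref{lem:overlap} exhaust all configurations and force $\{1,2\}$ into every minimizing partition.

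After the induction terminates, the maximality of $k$ with $d_{2k-1} + d_{2k} > 2$ is automatic, since any further pair drawn from the remaining indices $\{2k+1,\ldots,n\}$ would satisfy $d_i + d_j \leq d_{2k+1} + d_{2k+2} \leq 2$ and thus fail feasibility. Finally, the closed-form value of $W^{\rm red}(D)$ follows by substituting the greedy partition into Corollary~\ref{cor:savings} and applying the pointwise identity $(d_{2i-1} - 1)^2 + (d_{2i} - 1)^2 - \tfrac{1}{2}(d_{2i-1} + d_{2i} - 2)^2 = \tfrac{1}{2}(d_{2i-1} - d_{2i})^2$ to each of the $k$ leading pairs, which converts the expression into $\tfrac{1}{2}\sum_{i=1}^k (d_{2i-1} - d_{2i})^2 + \sum_{i=2k+1}^n (d_i - 1)^2$, as stated.
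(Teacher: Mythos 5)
Your proof is correct, and it reaches the same conclusion using the same building blocks (\coref{cor:savings}, the comparison estimates, the non-overlap obstruction), but the organizational route differs in a way worth noting. The paper proves \leref{lem:overlap} as a standalone fact, then appeals somewhat informally to the $n=3$ reasoning for the claim that pairs can be ``shifted'' to the lowest available indices, and finally maximizes the number of pairs. You instead set up an honest induction on $n$ whose inductive step is the clean, sharp statement \emph{if $d_1+d_2>2$, then every global minimizer contains the block $\{1,2\}$}, proven by an explicit exchange argument over all configurations of the indices $1$ and $2$ in a candidate partition. This exchange-argument formulation absorbs the overlap obstruction (your fourth subcase is precisely Case~1 of \leref{lem:overlap}) and also makes the ``index shift'' step fully rigorous, where the paper only gestures at it; the price is that you need the additional subcase $d_i+d_j\leq 2$ with the inequality $(a+b+c)^2>a^2+b^2$, which the paper's route avoids (it handles that situation inside Case~2 of \leref{lem:overlap} instead). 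Two small slips that do not affect the argument: in the subcase with disjoint pairs $\{1,i\},\{2,j\}$ and $d_i+d_j>2$ the gain in the savings functional is $(d_1-d_j)(d_2-d_i)$, not $2(d_1-d_j)(d_2-d_i)$; and in the final subcase $c=2-d_i-d_j$ is only $\geq 0$, not $>0$, since the case boundary $d_i+d_j=2$ lands here, though the conclusion $(a+b+c)^2>a^2+b^2$ still holds because $2ab>0$.
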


\begin{proof}
  \leref{lem:overlap} shows that a global minimizer $R \in \SO(n)$
  can not have a partition with \emph{overlapping} size two subsets.
  As in the proof of \theref{theo:global_min_3d} (the $n = 3$ case)
  we can decrease the value of $W(R\,;D)$ by shifting down
  the indices of all size two subsets as far as possible.
  Therefore the optimal partition is of the form
  $$
  \{1,2\} \sqcup \{3,4\} \sqcup \ldots \sqcup \{2l-1, 2l\}
  \sqcup \{2l+1\} \sqcup \ldots \sqcup \{n\}
  $$
  for some $l \leq k$. By \coref{cor:savings} the global
  minimum is realized by the critical points corresponding
  to the maximal possible choice $l = k$. The value of
  $W(R\,;D)$ at a global minimizer is computed by inserting
  the corresponding optimal partition into
  \theref{theo:critical_values} and \coref{cor:savings}.
\end{proof}

\begin{rem}
  The number of global minimizers in the above theorem is
  $2^k$, where $k$ is the number of blocks of size two in
  the preceding characterization of a global minimizer as
  a block-diagonal matrix. All global minimizers are
  block-diagonal similar to the $n = 3$ case (\theref{theo:global_min_3d}).
\end{rem}

\begin{rem}[On non-distinct entries of $D$]
  If we relax the \assref{ass:Dpos} and allow for
  $$
  d_1 \geq d_2 \geq \ldots \geq d_n > 0
  $$
  then the global minimizers may or may not be isolated.
  The formula for the reduced energy as stated in
  \theref{theo:global_min_nd}
  is, however, not affected.
\end{rem}

The following example illustrates the energy-minimizing traversal
of critical points which always terminates in a global minimizer.
\begin{exa}
  \label{exa:algo}
Let $D = \diag\left(4,2,1,\frac{1}{2},\frac{1}{4}\right)$.
\theref{theo:critical_values} shows that the critical points
can be characterized by certain partitions\footnote{More precisely,
  a labeling partition uniquely characterizes sets of critical points
which generate the same critical value. A block of size two, for
example, characterizes two different symmetric solutions corresponding
to the choice of sign for the rotation angle $\alpha$. Both choices,
however, yield the same value for the energy.}
of the index set $\{1,2,3,4,5\}$ and a choice of a sign for each
subset $I$ of the partition. Thus, we introduce the convenient notation
of a pair of a subset and a sign $(I,\pm)$, where the sign encodes a
possible choice for the determinant $\det{R_I} = \pm 1$.\\

{\bf Setup:} We consider a critical point $R^{(0)}$ corresponding to
the labeling partition
  \begin{equation}
    \mathcal{P}^{(0)} =
    \left\{
    \left(\{1\}, \,+\,\right),\,
    \left(\left\{2, 5\right\}, \,-\,\right),\,
    \left(\{3\}, \,-\,\right),\,
    \left(\{4\}, \,-\,\right)
    \right\}\;.
  \end{equation}
  Note that $d_2 + d_5 = 2 + \frac{1}{4} > 2$, i.e., the
  $2 \times 2$-block corresponding to $I = \{2,5\}$ exists,
  as required for a valid partition characterizing a critical
  point $R^{(0)}$. The corresponding critical value of the summation
  formula in the statement of \theref{theo:critical_values} is
  given by
  \begin{align}
    W^{(0)} &= W(R^{(0)}\,;D)
    = \underbrace{(4 - 1)^2}_{\left(\{1\}, \,+\,\right)}
    + \underbrace{(1 + 1)^2}_{\left(\{3\}, \,-\,\right)}
    + \underbrace{\left(1 + \frac{1}{2}\right)^2}_{\left(\{4\}, \,-\,\right)}
    + \underbrace{\frac{1}{2}\left(2 + \frac{1}{4}\right)^2}_{\left(\left\{2, 5\right\}, \,-\,\right)}\\
    &= \frac{569}{32} \approx 17.78\;.\notag
  \end{align}

  {\bf Step 1 (Choice of positive sign):}
  We consistently choose the positive sign for the determinant
  in the labeling partition which gives
  \begin{equation}
    \mathcal{P}^{(1)} =
    \left\{
    \left(\left\{1, 5\right\}, \,+\,\right),\,
    \left(\{2\}, \,+\,\right),\,
    \left(\{3\}, \,+\,\right),\,
    \left(\{4\}, \,+\,\right)
    \right\}\;.
  \end{equation}
  This updated partition characterizes a \emph{different}
  critical point $R^{(1)}$ realizing a lower energy level
  \begin{align}
    W^{(1)} &= W(R^{(1)}\,;D) =
     \underbrace{(4 - 1)^2}_{\left(\{1\}, \,+\,\right)}
    + \underbrace{(1 - 1)^2}_{\left(\{3\}, \,+\,\right)}
    + \underbrace{\left(1 - \frac{1}{2}\right)^2}_{\left(\{4\}, \,+\,\right)}
    + \underbrace{\frac{1}{2}\left(2 - \frac{1}{4}\right)^2}_{\left(\left\{2, 5\right\}, \,+\,\right)}\\
    &= \frac{345}{32} \approx 10.28\;.\notag
  \end{align}
  {\bf Step 2 (Disentanglement):}
  The next step of the procedure is to remove overlap of $2\times2$-blocks.
  In our example, we only have one such block and there is nothing to
  do, i.e., $\mathcal{P}^{(2)} = \mathcal{P}^{(1)}$.\\

  {\bf Step 3 (Index shift):}
  We now decrement the indices of the $2 \times 2$-blocks
  as much as possible, i.e., we string them together starting
  in the upper left corner. Shifting the $\{2,5\}$-block
  to $\{1,2\}$, we obtain the following new partition
  \begin{equation}
    \mathcal{P}^{(3)} =
    \left\{
    \left(\left\{1, 2\right\}, \,+\,\right),\,
    \left(\{3\}, \,+\,\right),\,
    \left(\{4\}, \,+\,\right),\,
    \left(\{5\}, \,+\,\right)
    \right\}\;.
  \end{equation}
  The energy level realized by a corresponding critical
  point $R^{(3)}$ is
  \begin{align}
    W^{(3)} &= W(R^{(3)}\,;D) =
    \underbrace{(1 - 1)^2}_{\left(\{3\}, \,+\,\right)}
    + \underbrace{\left(1 - \frac{1}{2}\right)^2}_{\left(\{4\}, \,+\,\right)}
    + \underbrace{\left(1 - \frac{1}{4}\right)^2}_{\left(\{5\}, \,+\,\right)}
    + \underbrace{\frac{1}{2}\left(4 - 2\right)^2}_{\left(\left\{1, 2\right\}, \,+\,\right)}\\
    &= \frac{45}{16} \approx 2.81\;.\notag
  \end{align}

{\bf Step 4 (Exhaustion by $2\times2$-blocks):}
In this step, we try to create as many $2\times 2$-blocks as possible.
We first locate the pair of subsets of size one with minimal indices
which is $(\{3\},\{4\})$. Since $d_3 + d_4 = 1 + \frac{1}{2} \leq 2$,
no further $2\times2$-block exists. Thus,
$\mathcal{P}^{(4)} = \mathcal{P}^{(3)}$.\\

{\bf Result:} The finally obtained labeling partition
\begin{equation}
  \mathcal{P} = \mathcal{P}^{(4)} =
  \left\{
  \left(\left\{1, 2\right\}, \,+\,\right),\,
  \left(\{3\}, \,+\,\right),\,
  \left(\{4\}, \,+\,\right),\,
  \left(\{5\}, \,+\,\right)
  \right\}
\end{equation}
characterizes a global minimizer. With the notation of
\theref{theo:global_min_nd} the maximal number of $2\times2$-blocks
is $k = 1$ and we have $2^k = 2$ global minimizers of the form
  \begin{equation}
    \rpolar(D) = \left(\;
    \begin{array}{ccccc}
      \cline{1-2}
      \multicolumn{1}{|c}{\cos\alpha_1} & -\sin\alpha_1 & \multicolumn{1}{|c}{0} & 0&0\\
      \multicolumn{1}{|c}{\sin\alpha_1} & \phantom{-}\cos\alpha_1 & \multicolumn{1}{|c}{0} &0 &0\\\cline{1-3}
      0&0 &\multicolumn{1}{|c}{1} &\multicolumn{1}{|c}{0} & 0\\
      \cline{3-4}
      0& 0 &0 &\multicolumn{1}{|c}{1} & \multicolumn{1}{|c}{0}\\
      \cline{4-5}
      0& 0 & 0&0 &\multicolumn{1}{|c|}{1}\\
      \cline{5-5}
    \end{array}
    \;\right)\;,\quad\text{with}\quad
    \cos(\alpha_{1}) = \frac{2}{d_1 + d_2} = \frac{1}{3}\;.
  \end{equation}
  Inserting the global minimizers into the energy, we obtain the reduced
  energy
  \begin{equation}
    W^{\rm red}(D) \eqdef W(\rpolar(D)\,;D) = \frac{45}{16} \approx 2.81\;.
  \end{equation}
  Just to give a comparison, the identity matrix $\id \in \SO(n)$
  realizes the energy level
  \begin{align}
    W(\id\,;D) &=
  \underbrace{(4 - 1)^2}_{\left(\{1\}, \,+\,\right)}
  + \underbrace{(2 - 1)^2}_{\left(\{2\}, \,+\,\right)}
  + \underbrace{(1 - 1)^2}_{\left(\{3\}, \,+\,\right)}
  + \underbrace{\left(1 - \frac{1}{2}\right)^2}_{\left(\{4\}, \,+\,\right)}
  + \underbrace{\left(1 - \frac{1}{4}\right)^2}_{\left(\{5\}, \,+\,\right)}\\
  &= \frac{173}{16} \approx 10.81\;.\notag
\end{align}
  Thus, the identity $\id \in \SO(n)$ is not a global minimizer.
\end{exa}

\begin{rem}[Optimality of $\id$]
  Our results imply that the identity matrix $\id \in \SO(n)$ is globally
  optimal for $W(R\,;D)$ with $D > 0$, if and only if there exists no
  $2\times2$-block with a positive choice of $\det{R_I}$, i.e.,
  $$
  \max_{1\leq i \neq j \leq n} (d_i + d_j) \quad\leq\quad 2\;.
  $$
  This corresponds to the tension-compression asymmetry described
  in~\cite{Fischle:2015:OC2D,Fischle:2015:OC3D,Fischle:2016:RPNI}
  for dimensions $n = 2,3$.
\end{rem}

 \countres
\makeatletter{}\section{Concluding remarks}
\label{sec:discussion}
For the sake of clarity of exposition, we have restricted our attention
to the case of a diagonal and positive definite parameter matrix
$D > 0$, i.e., $d_i > 0$. Our technical approach, however, readily
carries over to the more general case $d_i \neq 0$ with minor
modifications. The construction
\begin{equation}
  \hsnorm{\sym\left\{\left[R
    \left(
    \begin{array}{c|c}
      \id & \\\hline
      & -\id
    \end{array}
    \right)
    \right]
    \left[
    \left(
    \begin{array}{c|c}
      \id & \\\hline
      & -\id
    \end{array}
    \right)D
    \right]- \id\right\}
  }^2
\end{equation}
allows to reduce such a parameter matrix $D$ to
$\abs{D} \eqdef \diag(\abs{d_1},\ldots,\abs{d_n}) > 0$
which is positive definite. Note that the minimization must then
be carried out in the appropriate connected component of the
orthogonal matrices $\O(n)$. We also expect that the degenerate
case where some $d_i = 0$ can be handled with our techniques as
well.

The matrix group of rotations $\SO(3)$ equipped with its
natural bi-invariant Riemannian metric
\begin{equation}
  g(\xi, \eta)|_R \eqdef g(R^T\xi, R^T\eta)|_\id
                  \eqdef \scalprod{R^T \xi}{R^T \eta}
                  = \scalprod{\xi}{\eta}
\end{equation}
is a Riemannian manifold $(\SO(3), g)$. In~\cite{Neff01c}, the
dynamics of the following Riemannian gradient flow\footnote{For
  an introductory exposition of gradient flows on Riemannian
  manifolds, see, e.g.,~\cite{JMLee02}.} was investigated
\begin{equation}
  R^T \dot{R} = \skew(R^TD)
  \quad\isequivto\quad
  \dot{R} = -{\rm grad}\left(\frac{1}{2}\hsnorm{RD - \id}^2\right)\;.\label{eq:biot_flow}
\end{equation}
The flow \eqref{eq:biot_flow} converges to $R = \id$ for appropriate
initial conditions which is consistent with Grioli's theorem;
cf.~\secref{sec:intro}. Similarly, one can study the gradient
flow for the energy $\frac{1}{2}\hsnorm{\sym(RD - \id)}^2$ given by
\begin{equation}
  \label{eq:sym_flow}
  R^T \dot{R} = -\frac{1}{2} \skew\left((R^TD - \id)^2\right)
  \quad\isequivto\quad
  \dot{R} = -{\rm grad}\left(\frac{1}{2}\hsnorm{\sym(RD - \id)}^2 \right)\;.
\end{equation}
Our present results on critical points of $W(R\,;D)$ determines
the possible asymptotic solutions for the gradient flow
\eqref{eq:sym_flow}. A characterization of \emph{local}
minimizers is currently missing. For example, it is not clear
whether every local minimizer is automatically a global minimizer
which holds in dimension $n = 2$. It seems likely,
  that this holds in $n = 3$ as well. The classification of
local extrema of $W(R\,;D)$ is a completely open question in
$n \geq 4$.

\textbf{Acknowledgments:} Lev Borisov was partially supported by
NSF grant DMS-1201466. Andreas Fischle was supported by German
Research Foundation (DFG) grant SA2130/2-1 and, previously,
partially supported by DFG grant NE902/2-1 (also: SCHR570/6-1).
 \countres

\addcontentsline{toc}{section}{References}
\bibliographystyle{plain}

{\footnotesize
  \setlength{\bibsep}{1pt}
  \bibliography{./bofine-2016-cosseratnd-proof-arxiv-v2}
}

\end{document}